\newcommand{\R}{{\mathbb  R}}
\numberwithin{equation}{section}
\newtheorem{thm}{\bf Theorem}[section]
\newtheorem{lem}[thm]{\bf Lemma}
\newtheorem{defn}{\bf Definition}[section]
\theoremstyle{remark}
\newtheorem{rem}{\bf Remark}[section]
\newtheorem{exmp}{\bf Example}[section]
\begin{document}

\title{The Steady States of Antitone Electric Systems}
\author{Dan Com\u anescu\\
{\small Department of Mathematics, West University of Timi\c soara}\\
{\small Bd. V. P\^ arvan, No 4, 300223 Timi\c soara, Rom\^ ania}\\
{\small E-mail addresses: dan.comanescu@e-uvt.ro}}
\date{}

\maketitle

\begin{abstract}
The steady states of an antitone electric system are described by an antitone function with respect to the componentwise order. When this function is bounded from below by a positive vector, it has only one fixed point. This fixed point is attractive for the fixed point iteration method. In the general case, we find existence and uniqueness results of fixed points using the set of $\mathbb{P}$-matrices.  
\end{abstract}

\noindent {\bf Keywords:} componentwise order, antitone functions; fixed points; fixed point iteration method; nonnegative matrices; $\mathbb{P}$-matrix; $\mathbb{P}_0$-matrix \\
{\bf MSC Subject Classification 2020:} 06F30, 15Bxx, 41A65, 47Hxx, 54F05, 54H25.

\section{Introduction}

Some relevant mathematical models of power systems have the steady states modeled by the system\footnote{If ${\bf y}:=(y_1,\dots,y_n)^t$, then $\frac{1}{\bf y}=\left(\frac{1}{y_1},\dots, \frac{1}{y_n}\right)^t$.}
\begin{equation}\label{ecuatia-fix-initial}
{\bf y}={\bf k}-\widetilde{M}\frac{1}{\bf y},
\end{equation}
where the unknown ${\bf y}$ is an element of $(\R^*)^n$ and ${\bf k}\in \R^n$, $\widetilde{M}\in \mathcal{M}_n(\R)$ are parameters. In this paper we call the above system the electric power system.

As a first example, we specify that the steady states of a linear time invariant DC system with CPLs, see \cite{polyak} and \cite{comanescu}, are the solutions $({\bf y,u})\in \R^n\times \R^n$ of the problem 
\begin{equation}
\begin{cases}
{\bf y}=\overline{M}{\bf u}+{\bf k} \\
{\bf y}\circ {\bf u}=-{\bf P},
\end{cases}
\end{equation}
where $\overline{M}\in \mathcal{M}_n(\R)$, ${\bf P}\in \R^n$, ${\bf k}\in \R^n$, and ''$\circ $'' is the Hadamard product\footnote{If ${\bf y},{\bf z}\in \R^n$ have the components $y_1,\dots, y_n$ and ${z}_1,\dots, {z}_n$, then the Hadamard product (or the Schur product, \cite{chandler}) ${\bf y}\circ{\bf z}\in \R^n$ has the components $y_1{z}_1,\dots,y_n{z}_n$ (see \cite{horn}).}.
Using the notation $\widetilde{M}:=\overline{M}\text{diag}({\bf P})$ we can reduce this problem to the system \eqref{ecuatia-fix-initial}.  
In the specialized literature the above system is studied making certain assumptions about the matrix $M$. In \cite{sanchez} and \cite{polyak} the symmetric part of $\overline{M}$ is considered to be positive definite.   
In \cite{matveev} it is assumed that $A:=\overline{M}^{-1}$ is a Stieltjes matrix\footnote{The symmetric positive definite matrix $A=\left[A_{ij}\right]\in \mathcal{M}_n(\R)$ is called a Stieltjes matrix if for $i\neq j$ we have $A_{ij}\leq 0$ (see \cite{micchelli}).}. 

A second example comes from the papers \cite{jeeninga-1} and \cite{jeeninga-2} where a DC power grid model with constant-power loads at steady state is considered. 
For a given power grid with $n$ loads and $m$ sources the voltage potentials ${\bf V}$, the power ${\bf P}$, and the Kirchhoff matrix $Y$ are partitioned according to whether nodes are loads ($L$) and sources ($S$)
$${\bf V}=\begin{pmatrix}{\bf V}_L \\ {\bf V}_S\end{pmatrix}\in \R^{n+m}, {\bf P}=\begin{pmatrix}{\bf P}_L \\ {\bf P}_S\end{pmatrix}\in \R^{n+m},\,\,\text{and}\,\,Y=\begin{pmatrix} Y_{LL} & Y_{LS} \\ Y_{SL} & Y_{SS} \end{pmatrix}\in \mathcal{M}_{n+m}(\R).$$
The DC power flow equations for constant-power loads are given by
\begin{equation}\label{Y_LL-eq}
\text{diag}({\bf V}_L)Y_{LL}({\bf V}_L-{\bf V}_L^*)+{\bf P}_c={\bf 0},
\end{equation}
where ${\bf V}_L^*\in (\R_+^*)^n$ is the vector of the open-circuit voltages and ${\bf P}_c\in \R^n$ is the vector of constant power demands at the loads. 
If we note ${\bf y}:={\bf V}_L$, ${\bf k}:={\bf V}_L^*$, and $\widetilde{M}:=Y_{LL}^{-1}\text{diag}({\bf P}_c)$ the above system can be reduced to the system \eqref{ecuatia-fix-initial}. 
All voltage potentials are assumed to be positive. The equation \eqref{Y_LL-eq} is feasible for ${\bf P}_c$ if it has a positive solution ${\bf V}_L$ (operating point associated to ${\bf P}_c$). 
In the papers \cite{jeeninga-1} and \cite{jeeninga-2} is presented a study of the solutions of \eqref{Y_LL-eq} and are obtained necessary and sufficient conditions for feasibility. 

In many situations a component of the unknown vector ${\bf y}$ of the system \eqref{ecuatia-fix-initial} has a fixed sign.  Possibly making a change of variable we can assume that ${\bf y}$ is positive. The positive solutions of \eqref{ecuatia-fix-initial} are the fixed points of the function $F_{{\bf k},\widetilde{M}}:\left(\R_+^*\right)^n\to \R^n$ given by $F_{{\bf k},\widetilde{M}}({\bf y})={\bf k}-\widetilde{M}\frac{1}{\bf y}.$

In the paper \cite{comanescu} we studied the isotone electric system case characterized by the condition that $\widetilde{M}$ is a nonnegative matrix. The solutions of the isotone electric system are the fixed points of an isotone function with respect to the componentwise order\footnote{${\bf x}\leq {\bf y}\,\,\leftrightarrow \,\,x_i\leq y_i \,\,\forall i\in \{1,\dots,n\}$ (see \cite{ehrgott}).} ''$\leq$'' on $\R^n$.

In this paper, we continue the study of electric power systems of the form \eqref{ecuatia-fix-initial}, assuming that $M:=-\widetilde{M}$ is a nonnegative matrix and the unknown vector ${\bf y}$ is positive. The system \eqref{ecuatia-fix-initial} becomes {\bf the antitone electric system}
\begin{equation}\label{ecuatia-fix-123}
{\bf y}={\bf k}+M\frac{1}{\bf y}.
\end{equation}
The positive solutions of \eqref{ecuatia-fix-123} are the fixed points of the function $S_{{\bf k},M}:\left(\R_+^*\right)^n\to \R^n$ given by
\begin{equation}\label{functia-F}
S_{{\bf k},M}({\bf y})={\bf k}+M\frac{1}{\bf y}.
\end{equation}
The function $S_{{\bf k},M}$ is antitone with respect to the componentwise order; for ${\bf y},\overline{\bf y}\in \left(\R_+^*\right)^n$ with ${\bf y}\leq \overline{\bf y}$ we have $S_{{\bf k},M}(\overline{\bf y})\leq S_{{\bf k},M}({\bf y})$. $\Phi_{S_{{\bf k},M}}$ is the set of fixed points of $S_{{\bf k},M}$.

In Section \ref{sec-general-99} we present some properties of the fixed points of a continuous, antitone function $S:\left(\R_+^*\right)^n\to \R^n$. When $S$ is bounded from below by a positive vector, we use our study \cite{comanescu} on the fixed points of a continuous isotone function to prove the existence of a fixed point.  In this case, the set of continuous antitone functions is divided into the set of type I functions and the set of type II functions. We show that a type I function has a single fixed point that is attractive for the fixed point iteration method. In the general case, we find existence and uniqueness results of fixed points using the Jacobian matrix and the set of $\mathbb{P}$-matrices (see Appendix \ref{P-matrices-08}).  

In Section \ref{main-1} we study the steady states of an antitone electric system. We prove that $S_{{\bf k},M}$ is a continuous antitone function of type I. When ${\bf k}$ is positive, $S_{{\bf k},M}$ has a single fixed point which is attractive for the fixed point iteration method. For ${\bf k}\in \R^n$ and $M$ a nonnegative $\mathbb{P}_0$-matrix $S_{{\bf k},M}$ has at most one fixed point. For ${\bf k}\in \R^n$ and $M$ a nonnegative $\mathbb{P}_0$-matrix with positive elements on the diagonal, $S_{{\bf k},M}$ has at least one fixed point.

Some examples which facilitate the understanding of theoretical aspects are studied in Section \ref{examples-22}. 

At the end of the paper we present some notions and results used in the main sections.

\section{The fixed points of an antitone function}\label{sec-general-99}

We present some results about the fixed points of $S:\left(\R_+^*\right)^n\to \R^n$ which is assumed to be a continuous antitone function. We pay a special attention to the case when $S$ is bounded from below.

\subsection{The case $n=1$}

For $n=1$, an elementary study leads to the following results:
\begin{enumerate}[(i)]
\item if $\lim_{y\to 0}S(y)\leq 0$, then $S$ has no fixed points;
\item if $\lim_{y\to 0}S(y)> 0$, then $S$ has a single fixed point $y_S^{\Box}$. 
\end{enumerate}

An important situation is obtained when the inequalities $\lim_{y\to \infty}S(y)\geq k>0$ are verified. In this case the domain of $S\circ S$ is $\R_+^*$ and the sequence $\text{Fit}_S({y}_0)$, generated by the fixed point iteration method\footnote{$\text{Fit}_S({y}_0):=(y_r)_{r\in \mathbb{N}}$ with $y_{r+1}=S(y_r)$, $r\in \mathbb{N}$.}, has an infinity of terms. The sequence $\text{Fit}_{S\circ S}(k)$ is isotone\footnote{If $\text{Fit}_{S\circ S}(k):=(u_r)_{r\in \mathbb{N}}$, $u_0=k$, then $u_{r}\leq u_{r+1}$, $r\in \mathbb{N}$.}, convergent with the limit $\omega_{S\circ S}(k)$.  The sequence $\text{Fit}_{S\circ S}(S(k))$ is antitone\footnote{If $\text{Fit}_{S\circ S}(S(k)):=(v_r)_{r\in \mathbb{N}}$, $v_0=S(k)$, then $v_{r+1}\leq v_{r}$, $r\in \mathbb{N}$.}, convergent with the limit $\omega_{S\circ S}(S(k))$. We observe that we have $\omega_{S\circ S}(k)\leq \omega_{S\circ S}(S(k))$. 

{\bf I.} If $S$ is of type I, $\omega_{S\circ S}(k)=\omega_{S\circ S}(S(k))$, then $\omega_{S\circ S}(k)=\omega_{S\circ S}(S(k))=y_S^{\Box}$ (the fixed point of $S$) and it is globally attractive for the fixed point iteration method (for all $y_0>0$ we have $y_r\to y_S^{\Box}$, $\text{Fit}_{S}(y_0):=(y_r)_{r\in \mathbb{N}}$). 

{\bf II.} If $S$ is of type II, $\omega_{S\circ S}(k)<\omega_{S\circ S}(S(k))$, then $\omega_{S\circ S}(k)<y_S^{\Box}<\omega_{S\circ S}(S(k))$, the $\omega$-limit set of $S$ (see Appendix \ref{fixed-point-iteration-section}) satisfies $\Omega_S\subset[\omega_{S\circ S}(k),\omega_{S\circ S}(S(k))]$, and $\Omega_S$ can be very rich in elements (see Example \ref{exemplu-complex-55}). 

\subsection{$S$ is bounded from below by a positive vector}

In this paragraph we assume that the continuous antitone function $S:\left(\R_+^*\right)^n\to \R^n$ is bounded from below by ${\bf k}\in \left(\R_+^*\right)^n$; for all ${\bf y}\in \left(\R_+^*\right)^n$ we have ${\bf k}\leq S({\bf y})$. 

In this case $S\circ S:\left(\R_+^*\right)^n\to \R^n$ is a continuous isotone function. Also, $S\circ S$ is bounded from above. From Theorem 2.5, \cite{comanescu}, when $\Phi_{S\circ S}\neq \emptyset$, there exists ${\bf y}_{S\circ S}^{\Box}$, the dominant fixed point of $S\circ S$. This special fixed point of $S\circ S$ dominates all $\omega$-limit points of $S\circ S$ (for the fixed point iteration method). 
In what follows we present some properties of the function $S\circ S$.

\begin{lem}\label{ScircS-prop}
In the assumptions of this paragraph the following statements are valid.
\begin{enumerate}[(i)]
\item $S\circ S\left( \left(\R_+^*\right)^n\right)\subset [{\bf k}, S({\bf k})]:=\{{\bf y}\in \R^n\,|\,{\bf k}\leq {\bf y}\leq S({\bf k})\}$.
\item The fixed point iteration sequence $\emph{Fit}_{S\circ S}({\bf k})$, see Appendix \ref{fixed-point-iteration-section}, is isotone and convergent with the limit $\omega_{S\circ S}({\bf k})$. $\emph{Fit}_{S\circ S}(S({\bf k}))$ is antitone and convergent with the limit $\omega_{S\circ S}(S({\bf k}))$.
\item $\Phi_{S\circ S}\neq \emptyset$.
\item $\omega_{S\circ S}({\bf k})=S(\omega_{S\circ S}(S({\bf k})))\leq \omega_{S\circ S}(S({\bf k}))=S(\omega_{S\circ S}({\bf k}))={\bf y}_{S\circ S}^{\Box}$.
\item The $\omega$-limit set of $S\circ S$, see Appendix \ref{fixed-point-iteration-section}, satisfies $\Omega_{S\circ S}\subset [\omega_{S\circ S}({\bf k}), \omega_{S\circ S}(S({\bf k}))]$.
\end{enumerate}
\end{lem}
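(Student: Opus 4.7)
The plan is to dispatch the five items essentially in the stated order, since each one feeds into the next.

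For (i), I would simply observe that $S({\bf y})\in(\R_+^*)^n$ satisfies ${\bf k}\leq S({\bf y})$, and then antitonicity of $S$ gives $S(S({\bf y}))\leq S({\bf k})$, while the lower bound hypothesis applied to $S({\bf y})$ yields $S(S({\bf y}))\geq {\bf k}$. For (ii), the key remark is that $S\circ S$ is isotone, being the composition of two antitone maps. Since $(S\circ S)({\bf k})\geq {\bf k}$ (from the bound) and $S\circ S$ is isotone, induction shows $\text{Fit}_{S\circ S}({\bf k})$ is componentwise increasing, and it is bounded above by $S({\bf k})$ thanks to (i); convergence follows from the monotone convergence theorem applied component by component. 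The sequence $\text{Fit}_{S\circ S}(S({\bf k}))$ is handled dually: by (i) the first iterate already lies below $S({\bf k})$, and isotonicity propagates the decrease, while ${\bf k}$ is a lower bound. Part (iii) is then immediate from the continuity of $S\circ S$, which forces both limits to be fixed points.

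The core of the lemma is (iv). Here the trick is the commutation identity $S\circ(S\circ S)^{r}=(S\circ S)^{r}\circ S$, which gives
\[
S\bigl((S\circ S)^r({\bf k})\bigr)=(S\circ S)^r\bigl(S({\bf k})\bigr),\qquad r\in\mathbb{N}.
\]
Passing to the limit and using continuity of $S$ yields $S(\omega_{S\circ S}({\bf k}))=\omega_{S\circ S}(S({\bf k}))$. The reverse relation $S(\omega_{S\circ S}(S({\bf k})))=\omega_{S\circ S}({\bf k})$ then drops out either by the same identity applied to the other sequence, or by noting that $\omega_{S\circ S}({\bf k})$ is a fixed point of $S\circ S$ so applying $S$ once more to the previous equality returns it. The inequality $\omega_{S\circ S}({\bf k})\leq\omega_{S\circ S}(S({\bf k}))$ comes from ${\bf k}\leq S({\bf k})$ together with isotonicity of the iterates. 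Identifying $\omega_{S\circ S}(S({\bf k}))$ with the dominant fixed point ${\bf y}_{S\circ S}^{\Box}$ of $S\circ S$ will invoke Theorem 2.5 of \cite{comanescu}: given any $\omega$-limit point $w$ of $S\circ S$, one has $w\leq S({\bf k})$ by (i), hence $w=(S\circ S)^r(w)\leq (S\circ S)^r(S({\bf k}))$ by isotonicity, and letting $r\to\infty$ gives $w\leq\omega_{S\circ S}(S({\bf k}))$.

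For (v), I would combine this last calculation, which already supplies the upper bound in $\Omega_{S\circ S}$, with a symmetric argument for the lower bound: any $\omega$-limit $w$ is a fixed point of $S\circ S$, so $S(w)$ is also a fixed point (because $(S\circ S)(S(w))=S((S\circ S)(w))=S(w)$). Then $S(w)$ belongs to $\Omega_{S\circ S}$ and hence $S(w)\leq\omega_{S\circ S}(S({\bf k}))$; applying the antitone $S$ and using $S(\omega_{S\circ S}(S({\bf k})))=\omega_{S\circ S}({\bf k})$ from (iv) gives $w=S(S(w))\geq\omega_{S\circ S}({\bf k})$.

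The main obstacle is the bookkeeping in (iv): one must be careful not to assume a priori that the two $\omega$-limits are fixed points of $S$ (they are fixed points of $S\circ S$ only), and the whole dominance claim has to be justified through Theorem 2.5 of \cite{comanescu} rather than by a direct argument. Once that piece is in place, (v) is essentially free.
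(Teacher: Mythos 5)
Most of your proposal tracks the paper's argument: (i), (ii), the interleaving identity $S\circ(S\circ S)^r=(S\circ S)^r\circ S$ for the two equalities in (iv), and the inequality $\omega_{S\circ S}({\bf k})\leq\omega_{S\circ S}(S({\bf k}))$ are exactly the intended steps, and your continuity argument for (iii) is a perfectly good (slightly more self-contained) substitute for the paper's citation of Theorem 2.7 of \cite{comanescu}. The problem lies in how you handle the dominance statement in (iv) and, especially, part (v): you assert that any $\omega$-limit point $w$ of $S\circ S$ satisfies $w=(S\circ S)^r(w)$, i.e.\ that every $\omega$-limit point of the iteration of $S\circ S$ is a fixed point of $S\circ S$. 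That is not justified and is false for general continuous isotone maps: the isotone map $(x,y)\mapsto(y,x)$ has the $2$-cycle $(1,2)\to(2,1)$, whose points are $\omega$-limit points but not fixed points. The appendix only guarantees that $\omega_{S\circ S}({\bf y}_0)$ is invariant under $S\circ S$, not that its elements are fixed; nothing in your proposal establishes the stronger claim for maps of the special form $S\circ S$ with $S$ antitone and bounded below. Since both your upper bound ($w\leq\omega_{S\circ S}(S({\bf k}))$) and your lower bound ($w\geq\omega_{S\circ S}({\bf k})$, via ``$S(w)$ is also a fixed point'') in (v) pass through this premise, part (v) as written has a genuine gap, and the same premise contaminates your justification of $\omega_{S\circ S}(S({\bf k}))={\bf y}_{S\circ S}^{\Box}$ in (iv).

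The repair is the sandwich argument on the iterates themselves, which is what the paper does by invoking Lemma 2.1 (i)--(ii) of \cite{comanescu}: for any ${\bf y}_0$, the second iterate ${\bf y}_2=S\circ S({\bf y}_0)$ lies in $[{\bf k},S({\bf k})]$ by (i), so isotonicity gives $(S\circ S)^r({\bf k})\leq(S\circ S)^r({\bf y}_2)\leq(S\circ S)^r(S({\bf k}))$ for all $r$; fixing $m$ and letting a convergent subsequence of the middle terms (with $r\geq m$) pass to its limit $w$, the monotonicity of the outer sequences yields $(S\circ S)^m({\bf k})\leq w\leq(S\circ S)^m(S({\bf k}))$, and $m\to\infty$ gives $w\in[\omega_{S\circ S}({\bf k}),\omega_{S\circ S}(S({\bf k}))]$ without ever assuming $w$ is fixed. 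This also furnishes the dominance needed in (iv): since ${\bf y}_{S\circ S}^{\Box}$ is itself a fixed point (hence an $\omega$-limit point) and ${\bf y}_{S\circ S}^{\Box}\leq S({\bf k})$, one gets ${\bf y}_{S\circ S}^{\Box}\leq\omega_{S\circ S}(S({\bf k}))\leq{\bf y}_{S\circ S}^{\Box}$, which is in effect the content of Theorem 2.5-(v) of \cite{comanescu} that the paper cites directly. With that substitution your outline becomes correct.
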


\begin{proof}
$(i)$ From hypotheses, for ${\bf y}\in \left(\R_+^*\right)^n$, we have ${\bf k}\leq S({\bf y})$ and  ${\bf k}\leq S\circ S({\bf y})$. Using the fact that $S$ is antitone we obtain $S\circ S({\bf y})\leq S({\bf k})$.

$(ii)$ 
Because $S\circ S$ is isotone and ${\bf k}\leq S\circ S({\bf k})$ we obtain, by induction, the fact that  $\text{Fit}_{S\circ S}({\bf k})$ is isotone. It is bounded from above by $S({\bf k})$. Consequently, it is convergent. Analogously, the second statement is proved.

$(iii)$ From $(ii)$ we have that $\Omega_{S\circ S}\cap \left(\R_+^*\right)^n\neq \emptyset$. Using Theorem 2.7, \cite{comanescu}, we obtain that $\Phi_{S\circ S}\neq \emptyset$.

$(iv)$ Using Lemma 2.1-$(iii)$ from  \cite{comanescu} we obtain $\omega_{S\circ S}({\bf k})\leq \omega_{S\circ S}(S({\bf k}))$. We observe that ${\bf y}_{S\circ S}^{\Box} \leq S({\bf k})$. From Theorem 2.5-$(v)$, \cite{comanescu}, we obtain $\omega_{S\circ S}(S({\bf k}))={\bf y}_{S\circ S}^{\Box}$. We make the notation  $\text{Fit}_{S}({\bf k})=({\bf k}_r)_{r\in \mathbb{N}}$ and we have $\text{Fit}_{S\circ S}({\bf k})=({\bf k}_{2r})_{r\in \mathbb{N}}$, $\text{Fit}_{S\circ S}(S({\bf k}))=({\bf k}_{2r+1})_{r\in \mathbb{N}}$. Because ${\bf k}_{2r+1}=S({\bf k}_{2r})$ we obtain $\omega_{S\circ S}(S({\bf k}))=S(\omega_{S\circ S}({\bf k}))$. Because $\omega_{S\circ S}({\bf k})$ is a fixed point of $S\circ S$ we deduce that $S(\omega_{S\circ S}(S({\bf k})))= \omega_{S\circ S}({\bf k})$.

$(v)$ If ${\bf y}_0\in \left(\R_+^*\right)^n$ and $\text{Fit}_{S}({\bf y}_0)=({\bf y}_r)_{r\in \mathbb{N}}$, then
${\bf k}\leq {\bf y}_2=S\circ S({\bf y}_0)\leq S({\bf k})$. Because the set equality $\omega_{S\circ S}({\bf y}_0)=\omega_{S\circ S}({\bf y}_2)$ is satisfied, using Lemma 2.1, $(i)$ and $(ii)$, \cite{comanescu}, we have the announced inclusion.
\end{proof}
The $\Omega$-limit set of $S$ satisfies the following property.
\begin{thm}\label{Omega-S-ScircS}
In the assumptions of this paragraph we have $\Omega_{S}=\Omega_{S\circ S}$.
\end{thm}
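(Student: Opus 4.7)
The plan is to prove the two inclusions separately, exploiting the fact that the orbit of $S$ splits by parity into two interleaved orbits of $S\circ S$. Since $S$ is bounded from below by ${\bf k}\in(\R_+^*)^n$, every iterate ${\bf y}_r=S({\bf y}_{r-1})$ with $r\ge 1$ lies in $[{\bf k},\infty)\subset(\R_+^*)^n$, so both $\text{Fit}_S({\bf y}_0)$ and $\text{Fit}_{S\circ S}({\bf y}_0)$ are well-defined infinite sequences for every ${\bf y}_0\in(\R_+^*)^n$, and staying in the domain is not an issue.

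For the inclusion $\Omega_{S\circ S}\subseteq \Omega_S$, I would note that for any starting point ${\bf y}_0$, the orbit $\text{Fit}_{S\circ S}({\bf y}_0)=({\bf y}_{2r})_{r\in\mathbb{N}}$ is a subsequence of $\text{Fit}_S({\bf y}_0)=({\bf y}_r)_{r\in\mathbb{N}}$. Hence any accumulation point of the former is an accumulation point of the latter, giving $\omega_{S\circ S}({\bf y}_0)\subseteq\omega_S({\bf y}_0)\subseteq\Omega_S$. Taking unions over ${\bf y}_0$ yields $\Omega_{S\circ S}\subseteq\Omega_S$.

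For the reverse inclusion $\Omega_S\subseteq\Omega_{S\circ S}$, fix ${\bf z}\in\omega_S({\bf y}_0)$ and choose a subsequence ${\bf y}_{r_k}\to{\bf z}$. By the pigeonhole principle, after passing to a further subsequence we may assume all indices $r_k$ have the same parity. In the even case, write $r_k=2s_k$; then ${\bf y}_{r_k}={(S\circ S)}^{s_k}({\bf y}_0)$, so ${\bf z}\in\omega_{S\circ S}({\bf y}_0)\subseteq\Omega_{S\circ S}$. In the odd case, write $r_k=2s_k+1$; then ${\bf y}_{r_k}={(S\circ S)}^{s_k}(S({\bf y}_0))$, so ${\bf z}\in\omega_{S\circ S}(S({\bf y}_0))\subseteq\Omega_{S\circ S}$. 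Either way ${\bf z}\in\Omega_{S\circ S}$, proving the inclusion.

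I do not expect any real obstacle here; the proof is purely a parity/subsequence bookkeeping argument. The only subtlety worth stating explicitly is the preservation of the domain under iteration — guaranteed by the hypothesis ${\bf k}\leq S({\bf y})$ — which legitimates the comparison between the orbits of $S$ and $S\circ S$ on the common domain $(\R_+^*)^n$.
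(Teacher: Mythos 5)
Your proof is correct and follows essentially the same route as the paper: both inclusions are obtained by the parity splitting of the orbit of $S$ into the two $S\circ S$-orbits $\text{Fit}_{S\circ S}({\bf y}_0)$ and $\text{Fit}_{S\circ S}(S({\bf y}_0))$, with a subsequence/pigeonhole argument for $\Omega_S\subseteq\Omega_{S\circ S}$. Your explicit remark that ${\bf k}\leq S({\bf y})$ keeps all iterates in the domain is a welcome clarification, but the argument is the paper's own.
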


\begin{proof}
If ${\bf y}_0\in \left(\R_+^*\right)^n$ and  $\text{Fit}_{S}({\bf y}_0)=({\bf y}_r)_{r\in \mathbb{N}}$, then we have $\text{Fit}_{S\circ S}({\bf y}_0)=({\bf y}_{2r})_{r\in \mathbb{N}}$ and $\text{Fit}_{S\circ S}({\bf y}_1)=({\bf y}_{2r+1})_{r\in \mathbb{N}}$.
We observe that $\omega_{S\circ S}({\bf y}_0)\subset \omega_{S}({\bf y}_0)$ and we obtain $\Omega_{S\circ S}\subset\Omega_{S}$. 

Let be ${\bf y}\in \Omega_{S}$. There exists ${\bf y}_0\in\left(\R_+^*\right)^n$ and
there exists the strictly isotone sequence $(r_q)_{q\in \mathbb{N}}$ of $\mathbb{N}$ such that $({\bf y}_{r_q})_{q\in \mathbb{N}}$, the subsequence of $\text{Fit}_{S}({\bf y}_0)$, is convergent with the limit ${\bf y}$. If $({\bf y}_{r_q})_{q\in \mathbb{N}}\cap 2\mathbb{N}$ has an infinity of elements, then ${\bf y}\in  \omega_{S\circ S}({\bf y}_0)$. If $({\bf y}_{r_q})_{q\in \mathbb{N}}\cap 2\mathbb{N}+1$ has an infinity of elements, then ${\bf y}\in  \omega_{S\circ S}({\bf y}_1)$. We deduce that $ \Omega_{S}\subset \Omega_{S\circ S}$. 

Consequently, $\Omega_{S}= \Omega_{S\circ S}$. 
\end{proof}

The existence of fixed points is ensured for a continuous antitone function that is bounded from below by a positive vector.

\begin{thm}\label{phi=neq-01}
If the continuous antitone function $S:\left(\R_+^*\right)^n\to \R^n$ is bounded from below by ${\bf k}\in \left(\R_+^*\right)^n$, then $\Phi_S\neq \emptyset$ and $\Phi_S\subset [\omega_{S\circ S}({\bf k}), \omega_{S\circ S}(S({\bf k}))]$. 
\end{thm}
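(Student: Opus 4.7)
The plan is to combine the structural information from Lemma \ref{ScircS-prop} with a topological fixed-point argument on the order interval $[\omega_{S\circ S}({\bf k}), \omega_{S\circ S}(S({\bf k}))]$, which I abbreviate as $[{\bf a},{\bf b}]$. By Lemma \ref{ScircS-prop}(iv) we have ${\bf a}\leq {\bf b}$ with $S({\bf a})={\bf b}$ and $S({\bf b})={\bf a}$, so the endpoints behave like a period-two cycle of $S$.

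For the existence half ($\Phi_S\neq \emptyset$), I would first verify that $S$ sends the closed box $[{\bf a},{\bf b}]$ into itself: for any ${\bf y}\in [{\bf a},{\bf b}]$ antitonicity gives $S({\bf b})\leq S({\bf y})\leq S({\bf a})$, i.e., ${\bf a}\leq S({\bf y})\leq {\bf b}$. Since $[{\bf a},{\bf b}]$ is a nonempty, compact, convex subset of $\R^n$ and $S$ is continuous, Brouwer's fixed point theorem produces some ${\bf y}^\star\in [{\bf a},{\bf b}]$ with $S({\bf y}^\star)={\bf y}^\star$.

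For the inclusion $\Phi_S\subset [{\bf a},{\bf b}]$, I would pick an arbitrary ${\bf y}\in \Phi_S$. The bound ${\bf k}\leq S({\bf y})={\bf y}$ is immediate, and applying the antitone map $S$ to this inequality gives ${\bf y}=S({\bf y})\leq S({\bf k})$. Then, since ${\bf y}$ is also a fixed point of the isotone map $S\circ S$ and ${\bf k}\leq {\bf y}\leq S({\bf k})$, an induction on the iterates yields $(S\circ S)^r({\bf k})\leq {\bf y}\leq (S\circ S)^r(S({\bf k}))$ for every $r\in \mathbb{N}$; passing to the limit (which exists by Lemma \ref{ScircS-prop}(ii)) gives $\omega_{S\circ S}({\bf k})\leq {\bf y}\leq \omega_{S\circ S}(S({\bf k}))$, as required.

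The main obstacle is the existence step. The isotone machinery of Lemma \ref{ScircS-prop} only produces fixed points of $S\circ S$, namely ${\bf a}$ and ${\bf b}$, and these are fixed points of $S$ itself only in the degenerate case ${\bf a}={\bf b}$ (which is exactly the type I/II dichotomy described for $n=1$). When ${\bf a}\neq {\bf b}$, one genuinely needs a topological ingredient to land a fixed point of $S$ strictly between them, and invoking Brouwer on the invariant box $[{\bf a},{\bf b}]$ seems the most economical route. An alternative would be to appeal to a dedicated order-theoretic fixed-point theorem for antitone self-maps of an order interval, but this is essentially equivalent machinery.
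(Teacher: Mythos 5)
Your proof is correct and its main step is the same as the paper's: Lemma \ref{ScircS-prop}(iv) makes $S$ a continuous self-map of the compact convex order interval $[\omega_{S\circ S}({\bf k}),\omega_{S\circ S}(S({\bf k}))]$, and Brouwer's theorem gives a fixed point. For the inclusion you argue directly, sandwiching any fixed point ${\bf y}$ via ${\bf k}\leq {\bf y}\leq S({\bf k})$ and the isotone iterates of $S\circ S$, which is a slightly more elementary route than the paper's appeal to $\Phi_S\subset\Omega_S=\Omega_{S\circ S}$ (Theorem \ref{Omega-S-ScircS} together with Lemma \ref{ScircS-prop}(v)), but it establishes the same containment.
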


\begin{proof}
If $\omega_{S\circ S}({\bf k})\leq {\bf y}\leq  \omega_{S\circ S}(S({\bf k}))$, then we can write $\omega_{S\circ S}({\bf k})=S(\omega_{S\circ S}(S({\bf k})))\leq S({\bf y})\leq S(\omega_{S\circ S}({\bf k}))= \omega_{S\circ S}(S({\bf k}))$. The continous function $S_{|[\omega_{S\circ S}({\bf k}), \omega_{S\circ S}(S({\bf k}))]}$ maps the compact convex set $[\omega_{S\circ S}({\bf k}), \omega_{S\circ S}(S({\bf k}))]$ to itself. From Brouwer fixed point theorem, see Theorem 4.2.5, \cite{istratescu}, we have that  $S_{|[\omega_{S\circ S}({\bf k}), \omega_{S\circ S}(S({\bf k}))]}$ has a fixed point. From Lemma \ref{ScircS-prop} and Theorem \ref{Omega-S-ScircS} we have the announced inclusion.  
\end{proof}

A continuous and antitone function $S$, bounded from below by the positive vector ${\bf k}$, can be of the following two types\footnote{Let be ${\bf y}, \overline{\bf y}\in \R^n$. We have ${\bf y}\lneq \overline{\bf y}$ if and only if ${\bf y}\leq \overline{\bf y}$ and ${\bf y}\neq \overline{\bf y}$.}:
\begin{enumerate}[{\bf type I}]
\item if $\omega_{S\circ S}({\bf k})= \omega_{S\circ S}(S({\bf k}))={\bf y}_{S\circ S}^{\Box}$;
\item if $\omega_{S\circ S}({\bf k})\lneq \omega_{S\circ S}(S({\bf k}))={\bf y}_{S\circ S}^{\Box}$.
\end{enumerate}
In many practical situations, for a specific continuous antitone function $S$, we can numerically decide whether $S$ is of type  I or it is of type II. For ${\bf k}\in \left(\R_+^*\right)^n$ and $M$ a nonnegative matrix, $S_{{\bf k},M}$ of the form \eqref{functia-F} is of type I. The continuous antitone functions used in Example \ref{exemplu-complex-55} and Example \ref{exemplu-complex-2D-09} are of type II.

\subsubsection{Functions of type I.}

\begin{thm}\label{global-attractiv-99}
Suppose that the continuous antitone function $S:\left(\R_+^*\right)^n\to \R^n$ is bounded from below by ${\bf k}\in \left(\R_+^*\right)^n$ and it is of type I. Then,
\begin{enumerate}[(i)]
\item $\Phi_S$ has a single fixed point ${\bf y}_S^{\Box}$ and we have ${\bf y}_S^{\Box}={\bf y}_{S\circ S}^{\Box}$;
\item for all ${\bf y}_0\in \left(\R_+^*\right)^n$ the sequence $\emph{Fit}_{S}({\bf y}_0)$ is convergent and its limit is ${\bf y}_{S}^{\Box}$.
 \end{enumerate}
\end{thm}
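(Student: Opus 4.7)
The plan is to reduce both parts of Theorem \ref{global-attractiv-99} to facts already in hand: Theorem \ref{phi=neq-01}, Lemma \ref{ScircS-prop}(ii), and the defining equality for type I.

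For part (i), Theorem \ref{phi=neq-01} already delivers $\Phi_S \neq \emptyset$ together with the containment $\Phi_S \subset [\omega_{S\circ S}({\bf k}),\omega_{S\circ S}(S({\bf k}))]$. Under the type I hypothesis this order interval collapses to the single point ${\bf y}_{S\circ S}^{\Box}$, so $\Phi_S = \{{\bf y}_{S\circ S}^{\Box}\}$, and we define ${\bf y}_S^{\Box} := {\bf y}_{S\circ S}^{\Box}$. No further work is required.

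For part (ii), the strategy is to sandwich the even and odd subsequences of $\text{Fit}_S({\bf y}_0)$ between two monotone orbits of the isotone map $S\circ S$ whose limits both equal ${\bf y}_{S\circ S}^{\Box}$. Starting from an arbitrary ${\bf y}_0 \in (\R_+^*)^n$, I first bring the iteration into a compact box: the lower bound gives ${\bf y}_1 = S({\bf y}_0) \geq {\bf k}$, and antitonicity then yields ${\bf y}_2 = S({\bf y}_1) \leq S({\bf k})$, while ${\bf y}_2 \geq {\bf k}$ holds by the same lower bound. Hence ${\bf y}_r \in [{\bf k}, S({\bf k})]$ for every $r \geq 2$. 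Now apply $(S\circ S)^r$ to the inequalities ${\bf k} \leq {\bf y}_2 \leq S({\bf k})$, using that $S\circ S$ is isotone, to obtain
$$(S\circ S)^r({\bf k}) \;\leq\; {\bf y}_{2r+2} \;\leq\; (S\circ S)^r(S({\bf k})), \qquad r\geq 0.$$
By Lemma \ref{ScircS-prop}(ii) the two bounding sequences converge to $\omega_{S\circ S}({\bf k})$ and $\omega_{S\circ S}(S({\bf k}))$, and by the type I assumption both of these coincide with ${\bf y}_{S\circ S}^{\Box} = {\bf y}_S^{\Box}$. Therefore the even subsequence converges to ${\bf y}_S^{\Box}$. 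An identical sandwich, applied now with the starting term ${\bf y}_3 \in [{\bf k},S({\bf k})]$ in place of ${\bf y}_2$, handles the odd subsequence, and the two parity subsequences sharing the same limit force $\text{Fit}_S({\bf y}_0) \to {\bf y}_S^{\Box}$.

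I do not anticipate a serious obstacle: once the type I identification is in force, the sandwich argument delivers global attractivity almost mechanically. The only delicate point is that for an arbitrary starting vector ${\bf y}_0$ one cannot sandwich ${\bf y}_0$ itself, so the comparison must be started from ${\bf y}_2$ (for the even part) and ${\bf y}_3$ (for the odd part); the lower bound ${\bf k}\leq S({\bf y})$ kicks in after one application of $S$ and the matching upper bound after two, which is exactly what makes this shift possible.
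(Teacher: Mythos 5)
Your proposal is correct. Part (i) is exactly the paper's argument: Theorem \ref{phi=neq-01} gives $\Phi_S\neq\emptyset$ and $\Phi_S\subset[\omega_{S\circ S}({\bf k}),\omega_{S\circ S}(S({\bf k}))]$, and the type I hypothesis collapses that interval to $\{{\bf y}_{S\circ S}^{\Box}\}$. For part (ii) you take a genuinely more hands-on route than the paper. The paper argues via the $\omega$-limit machinery: $\text{Fit}_S({\bf y}_0)$ is bounded, so $\omega_S({\bf y}_0)\neq\emptyset$, and by Theorem \ref{Omega-S-ScircS} ($\Omega_S=\Omega_{S\circ S}$) together with Lemma \ref{ScircS-prop}(v) every subsequential limit lies in $[\omega_{S\circ S}({\bf k}),\omega_{S\circ S}(S({\bf k}))]=\{{\bf y}_S^{\Box}\}$, forcing convergence. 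You instead sandwich the parity subsequences: after two steps ${\bf y}_2,{\bf y}_3\in[{\bf k},S({\bf k})]$, and isotonicity of $S\circ S$ gives $(S\circ S)^r({\bf k})\leq {\bf y}_{2r+2},\,{\bf y}_{2r+3}\leq (S\circ S)^r(S({\bf k}))$, with both bounding orbits converging (Lemma \ref{ScircS-prop}(ii)) to the common type I limit; the squeeze then yields convergence of the full sequence. Your version is more elementary and self-contained — it does not need Theorem \ref{Omega-S-ScircS} or the subsequential-limit argument — while the paper's version is a one-liner precisely because it reuses those previously established results; your observation about the index shift (comparison must start at ${\bf y}_2$ and ${\bf y}_3$, not ${\bf y}_0$) is exactly the right care point and is handled correctly.
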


\begin{proof}
$(i)$ From Theorem \ref{phi=neq-01} we have $\Phi_{S}=\{{\bf y}_{S\circ S}^{\Box}\}$. 

$(ii)$ If ${\bf y}_0\in \left(\R_+^*\right)^n$, then the sequence $\text{Fit}_{S}({\bf y}_0)$ is bounded and $\omega_S({\bf y}_0)=\{{\bf y}_{S}^{\Box}\}$. Consequently, $\text{Fit}_{S}({\bf y}_0)$ is convergent with the limit ${\bf y}_{S}^{\Box}$.
\end{proof}

From Lemma \ref{ScircS-prop} and Theorem \ref{phi=neq-01} we obtain the following results.

\begin{lem}
Each of the following conditions is a sufficient condition for $S$ to be of type I:
\begin{enumerate}[(i)]
\item $S$ has no comparable elements of order two\footnote{The vector ${\bf y}\in \left(\R_+^*\right)^n$ is an element of order two for $S$ if ${\bf y}\neq S({\bf y})$ and $S\circ S({\bf y})={\bf y}$.}.
\item There is no chain with three elements\footnote{A subset of a partially ordered set is a chain if it is totally ordered with respect to the induced order.} in $\Phi_{S\circ S}$.
\end{enumerate}
\end{lem}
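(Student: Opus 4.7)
The plan is to prove the contrapositive of each implication: assuming $S$ is of type II, I will exhibit the forbidden configuration in each of (i) and (ii). Set ${\bf a}:=\omega_{S\circ S}({\bf k})$ and ${\bf b}:=\omega_{S\circ S}(S({\bf k}))$. By Lemma \ref{ScircS-prop}(iv), ${\bf a},{\bf b}\in \Phi_{S\circ S}$ with ${\bf a}=S({\bf b})$, ${\bf b}=S({\bf a})$, and ${\bf a}\leq {\bf b}$. If $S$ is of type II then ${\bf a}\lneq {\bf b}$; in particular $S({\bf a})={\bf b}\neq {\bf a}$ and $S({\bf b})={\bf a}\neq {\bf b}$, so both ${\bf a}$ and ${\bf b}$ are elements of order two for $S$. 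Being distinct and comparable, they immediately violate (i), which proves the first implication.

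For (ii), I will invoke Theorem \ref{phi=neq-01} to extract a genuine fixed point ${\bf y}^\star\in \Phi_S$ with ${\bf a}\leq {\bf y}^\star\leq {\bf b}$. Since ${\bf y}^\star=S({\bf y}^\star)$ while neither ${\bf a}$ nor ${\bf b}$ is a fixed point of $S$ (they swap under $S$ as noted above), both inequalities are strict, giving ${\bf a}\lneq {\bf y}^\star\lneq {\bf b}$. Using $\Phi_S\subset \Phi_{S\circ S}$, this is a three-element chain in $\Phi_{S\circ S}$, contradicting (ii).

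The only point that needs checking is the strictness of the chain in (ii), i.e., that ${\bf y}^\star$ coincides with neither endpoint. This is automatic: ${\bf a}={\bf y}^\star$ would give ${\bf a}\in \Phi_S$, hence ${\bf a}=S({\bf a})={\bf b}$, contradicting type II; the case ${\bf b}={\bf y}^\star$ is symmetric. There is no real obstacle here beyond identifying the canonical two-cycle $({\bf a},{\bf b})$ produced by Lemma \ref{ScircS-prop} as the source of both an order-two pair (for (i)) and, together with the Brouwer fixed point from Theorem \ref{phi=neq-01}, of a length-three chain in $\Phi_{S\circ S}$ (for (ii)).
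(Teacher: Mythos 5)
Your argument is correct and follows exactly the route the paper intends: the paper states the lemma as a direct consequence of Lemma \ref{ScircS-prop} and Theorem \ref{phi=neq-01}, and your contrapositive argument — using Lemma \ref{ScircS-prop}(iv) to produce the comparable two-cycle $({\bf a},{\bf b})$ for (i), and Theorem \ref{phi=neq-01} to insert a fixed point strictly between them for the three-element chain in (ii) — is precisely the detail being left to the reader.
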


\subsubsection{Functions of type II.}

As we have seen, a 1-D continuous antitone function, bounded below by a strictly positive number, has a sigle fixed point. The continuous antitone function presented in Example \ref{exemplu-complex-2D-09} is of type II and it has an infinity of fixed points.

\subsection{The general case}

It is easy to observe that we have the following result.
\begin{lem}\label{incomparable-789}
If $S:\left(\R_+^*\right)^n\to \R^n$ is an antitone function and ${\bf y},\overline{\bf y}$ are distinct fixed points of the function $S$, then ${\bf y}$ and $\overline{\bf y}$ are incomparable. 
\end{lem}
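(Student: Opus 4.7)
The plan is a short proof by contradiction that exploits the interplay between antitonicity and the antisymmetry of the componentwise order. Suppose the two distinct fixed points ${\bf y}$ and $\overline{\bf y}$ are comparable; without loss of generality I would assume ${\bf y}\leq \overline{\bf y}$ (the other case is entirely symmetric, with the roles of ${\bf y}$ and $\overline{\bf y}$ swapped).

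Next, I would invoke the defining property of an antitone function: from ${\bf y}\leq \overline{\bf y}$ it follows that $S(\overline{\bf y})\leq S({\bf y})$. Since both vectors are fixed points, $S({\bf y})={\bf y}$ and $S(\overline{\bf y})=\overline{\bf y}$, which rewrites this chain of inequalities as $\overline{\bf y}\leq {\bf y}$. Combined with the starting assumption ${\bf y}\leq \overline{\bf y}$, antisymmetry of the componentwise partial order on $\R^n$ forces ${\bf y}=\overline{\bf y}$, contradicting the hypothesis that the two fixed points are distinct.

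There is really no obstacle here: the argument is purely order-theoretic, uses neither continuity of $S$ nor boundedness from below, and goes through in one line once antitonicity is applied to a presumed comparison. The only care required is phrasing ``without loss of generality'' cleanly, since the definition of incomparability rules out both ${\bf y}\leq \overline{\bf y}$ and $\overline{\bf y}\leq {\bf y}$; both sub-cases reduce to the same one-step application of $S$ on the assumed inequality.
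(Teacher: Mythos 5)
Your argument is correct and is exactly the one the paper has in mind (the lemma is stated there with ``it is easy to observe'' and no written proof): assuming comparability, antitonicity applied to ${\bf y}\leq \overline{\bf y}$ together with the fixed point identities gives $\overline{\bf y}\leq {\bf y}$, and antisymmetry forces ${\bf y}=\overline{\bf y}$, a contradiction. Nothing is missing; continuity and boundedness are indeed irrelevant here.
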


When the antitone function $S$ is a $\mathcal{C}^1$-function we can use  the Jacobian matrix function $J_S:\left(\R_+^*\right)^n\to \mathcal{M}_n(\R)$ to study the fixed points of $S$. It is easy to observe that for all ${\bf y}\in \left(\R_+^*\right)^n$ the matrix $(-J_{S}({\bf y}))$ is nonnegative. 

We consider the continuous isotone function $\Psi: \left(\R_+^*\right)^n\to \R^n$ given by $\Psi({\bf y})={\bf y}-S({\bf y})$. 
The vector ${\bf y}\in \left(\R_+^*\right)^n$ is a fixed point of $S$ if and only if $\Psi({\bf y})={\bf 0}$.
When $S$ is a $\mathcal{C}^1$-function we have the equality $J_{\Psi}({\bf y})=\mathbb{I}_n-J_S({\bf y})$. 

The set of $\mathbb{P}$-matrices and the set of $\mathbb{P}_0$-matrices, see Appendix \ref{P-matrices-08}, facilitate the demonstration of existence and uniqueness results of fixed points.

\begin{thm}\label{unicity-S-78}
If $S$ is a $\mathcal{C}^1$-function and $\mathbb{I}_n-J_{S}({\bf y})$ is a $\mathbb{P}$-matrix, then $S$ has at most one fixed point. 
\end{thm}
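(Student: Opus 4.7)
My plan is to convert the fixed-point uniqueness problem into a global injectivity problem for the auxiliary map $\Psi(\mathbf{y}) = \mathbf{y} - S(\mathbf{y})$ already introduced just before the theorem statement. Since a vector $\mathbf{y} \in \left(\R_+^*\right)^n$ is a fixed point of $S$ if and only if $\Psi(\mathbf{y}) = \mathbf{0}$, showing that $\Psi$ is injective immediately gives at most one fixed point of $S$.

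The steps, in order, would be: First, differentiate to get $J_{\Psi}(\mathbf{y}) = \mathbb{I}_n - J_S(\mathbf{y})$, which by hypothesis is a $\mathbb{P}$-matrix at every point of the domain. Second, invoke the Gale--Nikaido global univalence theorem: a $\mathcal{C}^1$ map from a rectangular (Cartesian) region of $\R^n$ into $\R^n$ whose Jacobian is a $\mathbb{P}$-matrix everywhere on that region is injective. The domain $\left(\R_+^*\right)^n = (0,\infty)^n$ is the Cartesian product of open intervals and therefore qualifies as a rectangular region. Third, conclude that $\Psi^{-1}(\mathbf{0})$ has at most one element, which is the desired uniqueness.

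The one nontrivial ingredient is the Gale--Nikaido theorem itself; I expect it to appear (or be easy to include) in Appendix \ref{P-matrices-08}, since that appendix collects the $\mathbb{P}$-matrix machinery the paper relies on. If a direct citation is preferred to be avoided, the standard proof goes by contradiction: assuming $\Psi(\mathbf{a}) = \Psi(\mathbf{b})$ with $\mathbf{a} \neq \mathbf{b}$, one applies the mean value theorem componentwise on the segment joining $\mathbf{a}$ and $\mathbf{b}$ to produce a square submatrix of values of $J_\Psi$ whose determinant must vanish, contradicting the positivity of the corresponding principal minor. The compatibility check with Lemma \ref{incomparable-789} is automatic: two distinct fixed points of $S$ would be coordinatewise incomparable, exactly the configuration handled by the $\mathbb{P}$-matrix argument. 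Beyond securing that injectivity statement, the rest of the proof is a one-line identification of zeros of $\Psi$ with fixed points of $S$.
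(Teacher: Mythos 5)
Your proposal is correct and essentially identical to the paper's proof: both reduce uniqueness to injectivity of $\Psi(\mathbf{y})=\mathbf{y}-S(\mathbf{y})$ and derive that injectivity from the Gale--Nikaido univalence theorem, the paper citing its Theorem~4 on closed boxes $[\mathbf{a},\mathbf{b}]$ with $\mathbf{0}<\mathbf{a}\leq\mathbf{b}$ and letting $\mathbf{a},\mathbf{b}$ vary to cover $\left(\R_+^*\right)^n$, which is exactly the open-rectangle version you invoke. One caution: keep the citation rather than your parenthetical mean-value-theorem sketch, since the matrix that argument produces has each row equal to a row of $J_\Psi$ evaluated at a \emph{different} intermediate point, and such a mixed matrix need not be a $\mathbb{P}$-matrix --- this is precisely why the Gale--Nikaido theorem is nontrivial.
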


\begin{proof}
In Theorem 4 from \cite{gale} it is shown that {\it a differentiable function $F:[{\bf a},{\bf b}]\subset \R^n\to \R^n$ is injective if ${\bf a}\leq {\bf b}$ and for all ${\bf y}\in [{\bf a},{\bf b}]$ the Jacobian matrix $J_F({\bf y})$ is a $\mathbb{P}$-matrix}. 

For ${\bf 0}<{\bf a}\leq {\bf b}$ and ${\bf y}\in [{\bf a},{\bf b}]$ the Jacobian matrix $J_{\Psi}({\bf y})$ is a $\mathbb{P}$-matrix. By using the above result we have that $\Psi_{|[{\bf a},{\bf b}]}$ is an injective function. 
Because ${\bf a}\leq {\bf b}$ are arbitrary positive vectors we obtain that $\Psi$ is an injective function.
\end{proof}

\begin{rem}
In Example \ref{ex-190}, when $a\in [0.68...,0.83...]$, there exists ${\bf y}\in \left(\R_+^*\right)^n$ such that $\mathbb{I}_n-J_{S_{{\bf k},M}}({\bf y})$ is not a $\mathbb{P}$-matrix and the antitone function $S_{{\bf k},M}$ has several fixed points. In the same Example, for $a\in (0,0.68...)\cup (0.83...,1)$ there exists ${\bf y}\in \left(\R_+^*\right)^n$ such that $\mathbb{I}_n-J_{S_{{\bf k},M}}({\bf y})$ is not a $\mathbb{P}$-matrix and the antitone function $S_{{\bf k},M}$ has a single fixed point.
\end{rem}

In what follows, we present a consequence whose hypotheses are easier to verify in certain situations.

\begin{thm}
If $S$ is a $\mathcal{C}^1$-function and $(-J_{S}({\bf y}))$ is a $\mathbb{P}_0$-matrix, for all ${\bf y}\in \left(\R_+^*\right)^n$, then $S$ has at most one fixed point. 
\end{thm}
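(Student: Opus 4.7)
The plan is to reduce this statement to the preceding theorem (Theorem \ref{unicity-S-78}) by showing that the $\mathbb{P}_0$-hypothesis on $-J_S(\mathbf{y})$ automatically upgrades $\mathbb{I}_n - J_S(\mathbf{y})$ to a $\mathbb{P}$-matrix. So the essential content is the classical observation that adding a positive multiple of the identity turns a $\mathbb{P}_0$-matrix into a $\mathbb{P}$-matrix.

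First, I would fix $\mathbf{y}\in (\R_+^*)^n$ and denote $A := -J_S(\mathbf{y})$, which by hypothesis is a $\mathbb{P}_0$-matrix. I then need to verify that $\mathbb{I}_n + A$ is a $\mathbb{P}$-matrix, i.e.\ that every principal minor of $\mathbb{I}_n + A$ is strictly positive. For any index set $\alpha \subset \{1,\dots,n\}$ of cardinality $k$, write $A_\alpha$ for the corresponding principal submatrix. Expanding the characteristic-type determinant via the standard identity
\begin{equation*}
\det(\mathbb{I}_k + A_\alpha) \;=\; \sum_{j=0}^{k} e_j(A_\alpha),
\end{equation*}
where $e_j(A_\alpha)$ is the sum of all $j\times j$ principal minors of $A_\alpha$ (with the convention $e_0 = 1$), I observe that every $j\times j$ principal minor of $A_\alpha$ is itself a principal minor of $A$, hence nonnegative because $A$ is a $\mathbb{P}_0$-matrix. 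Since the $j=0$ term contributes $1$, the sum is strictly positive, so $\det(\mathbb{I}_k + A_\alpha) > 0$.

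This shows $\mathbb{I}_n - J_S(\mathbf{y}) = \mathbb{I}_n + A$ is a $\mathbb{P}$-matrix for every $\mathbf{y}\in (\R_+^*)^n$. An application of Theorem \ref{unicity-S-78} then immediately gives that $S$ has at most one fixed point.

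The only subtlety I anticipate is the bookkeeping in the principal-minor expansion and making sure the paper's definition of $\mathbb{P}_0$ (from Appendix \ref{P-matrices-08}) is the one requiring \emph{all} principal minors to be nonnegative (rather than, say, only the leading ones), since the argument uses closure of the $\mathbb{P}_0$ property under passage to arbitrary principal submatrices. Assuming the standard definition, the reduction is completely mechanical and no further analytic input is needed beyond Theorem \ref{unicity-S-78}.
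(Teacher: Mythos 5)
Your proposal is correct and takes essentially the same route as the paper: reduce to Theorem \ref{unicity-S-78} by observing that $\mathbb{I}_n-J_S({\bf y})=\mathbb{I}_n+(-J_S({\bf y}))$ is a $\mathbb{P}$-matrix whenever $-J_S({\bf y})$ is a $\mathbb{P}_0$-matrix. The only difference is that you prove this auxiliary fact directly via the expansion $\det(\mathbb{I}_k+A_\alpha)=\sum_{j=0}^{k}e_j(A_\alpha)$ with all terms nonnegative and $e_0=1$, whereas the paper simply cites Lemma 4.8.1 of Johnson--Smith--Tsatsomeros (that $A+\varepsilon\mathbb{I}_n$ is a $\mathbb{P}$-matrix for a $\mathbb{P}_0$-matrix $A$ and $\varepsilon>0$); since the paper's Appendix \ref{P-matrices-08} defines $\mathbb{P}_0$ as ``all principal minors nonnegative,'' your self-contained verification is valid.
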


\begin{proof}
Lemma 4.8.1, \cite{johnson-smith-tsatsomeros}, states that $A+\varepsilon \mathbb{I}_n$ is a $\mathbb{P}$-matrix when $A$ is a $\mathbb{P}_0$-matrix and $\varepsilon>0$.  In our case, for ${\bf y}\in \left(\R_+^*\right)^n$, the matrix $\mathbb{I}_n-J_{S}({\bf y})$ is a $\mathbb{P}$-matrix. We apply the above theorem. 
\end{proof}

The time has come to present a result about the existence of fixed points.

\begin{thm}\label{existence-general-890}
If $S:\left(\R_+^*\right)^n\to \R^n$ is a $\mathcal{C}^1$ function, it is antitone, for all ${\bf y}\in \left(\R_+^*\right)^n$ the matrix $\mathbb{I}_n-J_S({\bf y})$ is invertible, and for all $r\in \{1,\dots,n\}$ we have\footnote{$S_1,\dots,S_n$ are the components of the vectorial function $S$.} $\lim_{y_r\to 0}S_{r}({\bf y})=\infty$, then $S$ has at least one fixed point. 
\end{thm}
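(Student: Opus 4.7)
My plan is to recast the fixed-point problem as finding a zero of $\Psi({\bf y}):={\bf y}-S({\bf y})$, and to apply Hadamard's global inverse function theorem to show that $\Psi:(\R_+^*)^n\to \R^n$ is a surjective diffeomorphism. The hypotheses supply two of the three ingredients immediately: $\Psi$ is $\mathcal{C}^1$ with Jacobian $J_\Psi({\bf y})=\mathbb{I}_n-J_S({\bf y})$ invertible at every point of $(\R_+^*)^n$, so $\Psi$ is a local $\mathcal{C}^1$ diffeomorphism; and $(\R_+^*)^n$ is diffeomorphic to $\R^n$ via the componentwise logarithm, hence simply connected. By the classical Hadamard theorem, a proper local $\mathcal{C}^1$ diffeomorphism from a simply connected $n$-manifold onto $\R^n$ is a global diffeomorphism, so its surjectivity would yield a ${\bf y}^{*}\in (\R_+^*)^n$ with $\Psi({\bf y}^{*})={\bf 0}$, i.e. a fixed point of $S$.

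The key step is thus to verify that $\Psi$ is proper. I would fix a compact $K\subset \R^n$ and a sequence $({\bf y}_p)_{p\in \mathbb{N}}\subset \Psi^{-1}(K)$, and show it has a convergent subsequence in $(\R_+^*)^n$. If some coordinate $y_{p,r}\to 0$ along a subsequence, the blow-up hypothesis $\lim_{y_r\to 0}S_r({\bf y})=\infty$ forces $S_r({\bf y}_p)\to \infty$, and therefore $\Psi_r({\bf y}_p)=y_{p,r}-S_r({\bf y}_p)\to -\infty$, contradicting boundedness of $\Psi({\bf y}_p)$ on $K$. Hence each coordinate of ${\bf y}_p$ is uniformly bounded below by some $\varepsilon>0$. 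By antitonicity of $S$, the inequality ${\bf y}_p\geq \varepsilon \mathbf{1}$ gives $S({\bf y}_p)\leq S(\varepsilon \mathbf{1})$ componentwise, and writing ${\bf y}_p=\Psi({\bf y}_p)+S({\bf y}_p)$ then shows $({\bf y}_p)$ is uniformly bounded above as well. So $({\bf y}_p)$ lies in a compact subset of $(\R_+^*)^n$ and admits a convergent subsequence there, proving properness.

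The main obstacle, and indeed the essential content of the proof, is this properness verification, which is precisely where all three hypotheses enter in concert: the boundary blow-up keeps the sequence away from the coordinate hyperplanes, antitonicity prevents escape to infinity, and the invertibility of $\mathbb{I}_n-J_S$ upgrades the local diffeomorphism to a global one via Hadamard. Once these three ingredients are assembled the existence of a fixed point is immediate, and no further computation is required.
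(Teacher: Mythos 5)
Your proposal is correct, but it follows a genuinely different route from the paper. The paper argues variationally: it minimizes $F({\bf y})=\|\Psi({\bf y})\|_2^2$ over a box $[\boldsymbol{\alpha},\boldsymbol{\beta}]$ chosen, exactly as in your properness step, so that the boundary blow-up makes $\Psi_r$ very negative when $y_r$ is small and antitonicity makes $\Psi_r$ very positive when $y_r$ is large with ${\bf y}\geq\boldsymbol{\alpha}$; the minimum is therefore attained in the interior, and since $\nabla F({\bf y})=2J_{\Psi}({\bf y})^{T}\Psi({\bf y})$ with $J_{\Psi}=\mathbb{I}_n-J_S$ invertible, the interior minimizer must satisfy $\Psi({\bf y})={\bf 0}$. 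This is elementary, needing only the extreme value theorem and a gradient computation. You instead verify properness of $\Psi$ (using the two hypotheses in precisely the same way) and invoke Hadamard's global inverse function theorem after transporting the domain to $\R^n$ by the componentwise logarithm; this is sound, but it leans on a heavier black box, and for surjectivity alone you do not even need simple connectedness, since a proper map into the locally compact space $\R^n$ is closed while a local diffeomorphism is open, so the image of $\Psi$ is a nonempty clopen subset of the connected space $\R^n$. What your route buys in exchange is a strictly stronger conclusion: $\Psi$ is a global diffeomorphism onto $\R^n$, so under these hypotheses the fixed point is unique and ${\bf y}-S({\bf y})={\bf c}$ is solvable for every ${\bf c}$, whereas the paper obtains uniqueness only later by adding the $\mathbb{P}$-matrix assumption (Theorem \ref{unicity-S-99}). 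One caveat applies equally to both arguments: in your step ``$y_{p,r}\to 0$ forces $S_r({\bf y}_p)\to\infty$'' you read the hypothesis $\lim_{y_r\to 0}S_r({\bf y})=\infty$ as uniform in the remaining variables (otherwise the other coordinates of ${\bf y}_p$ could drift to infinity and depress $S_r$); the paper's own proof uses the same uniform reading when it produces $\alpha_r$ with $\Psi_r({\bf y})<-(F^*+1)$ for all ${\bf y}$ with $y_r<\alpha_r$, and the reading is satisfied in the intended application $S_{{\bf k},M}$, so you are on the same footing as the paper here.
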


\begin{proof}
We consider the function $F:\left(\R_+^*\right)^n\to \R$ given by\footnote{$\|\cdot\|_2$ is the Euclidean norm on $\R^n$.} $F({\bf y})=\|\Psi({\bf y})\|_2^2$.
By computation we obtain that $\nabla F({\bf y})=2J_{\Psi}({\bf y})^T\Psi({\bf y})$. Using the hypotheses we observe that a stationary point ${\bf y}$ of $F$, $\nabla F({\bf y})={\bf 0}$, verifies $\Psi({\bf y})={\bf 0}$.  

Let be $F^*\in F\left(\left(\R_+^*\right)^n\right)$. We prove the existence of $\boldsymbol{\alpha},\boldsymbol{\beta}\in \left(\R_+^*\right)^n$ with $\boldsymbol{\alpha}<\boldsymbol{\beta}$ such that $F({\bf y})\geq F^*+1$ for ${\bf y}\in \left(\R_+^*\right)^n\backslash [\boldsymbol{\alpha},\boldsymbol{\beta}]$. 

Because $\lim_{y_r\to 0}S_{r}({\bf y})=\infty$ we deduce the existence of $0<\alpha_r$ such that $\Psi_{r}({\bf y})=y_r-S_{r}({\bf y})<-(F^*+1)$ for ${\bf y}$ with $y_r<\alpha_r$. 
We note by $\boldsymbol{\alpha}$ the positive vector which has the components $\alpha_1,\dots,\alpha_n$. For $r\in \{1,\dots,n\}$ we choose $\beta_r>\alpha_r$ such that $\beta_r>S_r(\boldsymbol{\alpha})+F^*+1$. We note by $\boldsymbol{\beta}$ the positive vector which has the components $\beta_1,\dots,\beta_n$.
We have
$\left(\R_+^*\right)^n\backslash [\boldsymbol{\alpha},\boldsymbol{\beta}]=D_1\cup D_2$
with 
$D_1=\{{\bf y}\in\, \left(\R_+^*\right)^n\,|\,\exists r\in \{1,\dots,n\}\,\text{s. t.}\,y_r<\alpha_r\},$ and 
$D_2=\{{\bf y}\in \left(\R_+^*\right)^n\,|\,\boldsymbol{\alpha}\leq {\bf y}\,\text{and}\,\exists r\in \{1,\dots,n\}\,\text{s. t.}\,y_r>\beta_r\}.$

If ${\bf y}\in D_1$, then exists $r\in \{1,\dots,n\}$ such that $y_r<\alpha_r$. In this case it is obtained $\Psi_{r}({\bf y})<-(F^*+1)$ and further we deduce the inequality $F({\bf y})>F^*+1$. 

If ${\bf y}\in D_2$, then $\boldsymbol{\alpha}\leq {\bf y}$ and there exists $r\in \{1,\dots,n\}$ such that $y_r>\beta_r$. We have 
$$\Psi_r({\bf y})=y_r-S_r({\bf y})\geq \beta_r-S_r(\boldsymbol{\alpha})>F^*+1.$$ 
Consequently, $F({\bf y})>F^*+1$.

Because $F$ is a continuous function and $[\boldsymbol{\alpha},\boldsymbol{\beta}]$ is a compact set, there exists ${\bf y}^*\in [\boldsymbol{\alpha},\boldsymbol{\beta}]$ such that $F({\bf y}^*)=\min\{F({\bf y})\,|\,{\bf y}\in [\boldsymbol{\alpha},\boldsymbol{\beta}]\}$. From the continuity of $F$ we have that ${\bf y}^*$ it is not located on the border of $[\boldsymbol{\alpha},\boldsymbol{\beta}]$ and it is a stationary point. We deduce that $\Psi({\bf y}^*)={\bf 0}$ and consequently, ${\bf y}^*\in \Phi_S$.
\end{proof}

Synthesizing the above, we find a result of the existence and uniqueness of fixed points.

\begin{thm}\label{unicity-S-99}
If $S:\left(\R_+^*\right)^n\to \R^n$ is a $\mathcal{C}^1$ function, it is antitone, for all ${\bf y}\in \left(\R_+^*\right)^n$ the matrix $\mathbb{I}_n-J_{S}({\bf y})$ is a $\mathbb{P}$-matrix, and for all $r\in \{1,\dots,n\}$ we have $\lim_{y_r\to 0}S_{r}({\bf y})=\infty$, then $S$ has a single fixed point. 
\end{thm}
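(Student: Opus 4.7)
The plan is to derive Theorem \ref{unicity-S-99} as a direct synthesis of the two preceding results, Theorem \ref{unicity-S-78} (uniqueness) and Theorem \ref{existence-general-890} (existence). The only nontrivial bridging observation is that the hypothesis ``$\mathbb{I}_n-J_S({\bf y})$ is a $\mathbb{P}$-matrix'' is strictly stronger than the invertibility hypothesis required by Theorem \ref{existence-general-890}, so we can feed the $\mathbb{P}$-matrix assumption into both theorems.

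First, I would invoke Theorem \ref{unicity-S-78} verbatim: since $S$ is $\mathcal{C}^1$ and $\mathbb{I}_n-J_S({\bf y})$ is a $\mathbb{P}$-matrix for every ${\bf y}\in \left(\R_+^*\right)^n$, the function $S$ has at most one fixed point. This half of the argument is immediate.

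Next, for existence I would verify the hypotheses of Theorem \ref{existence-general-890}. The antitonicity of $S$ and the boundary condition $\lim_{y_r\to 0}S_r({\bf y})=\infty$ are already assumed. The missing hypothesis is the invertibility of $\mathbb{I}_n-J_S({\bf y})$ for every ${\bf y}$. Here I would use the standard fact (recorded in Appendix \ref{P-matrices-08}) that every $\mathbb{P}$-matrix is invertible, since its determinant is one of its principal minors and is therefore strictly positive. Thus the $\mathbb{P}$-matrix hypothesis of our theorem implies the invertibility hypothesis of Theorem \ref{existence-general-890}, and applying that theorem yields at least one fixed point.

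Combining the two conclusions, $S$ possesses exactly one fixed point. There is no genuine obstacle to overcome; the only subtlety is recognizing that we need to cite the implication ``$\mathbb{P}$-matrix $\Rightarrow$ invertible'' in order to pass from the stated hypotheses into the exact form required by Theorem \ref{existence-general-890}. The proof will therefore be quite short—essentially two sentences, one for each half.
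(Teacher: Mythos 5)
Your proposal is correct and matches the paper's own proof: uniqueness is taken verbatim from Theorem \ref{unicity-S-78}, and existence follows from Theorem \ref{existence-general-890} after noting that a $\mathbb{P}$-matrix is invertible (its determinant being a positive principal minor). No difference in approach.
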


\begin{proof}
From Theorem \ref{unicity-S-78} we deduce that $S$ has at most one fixed point. For ${\bf y}\in \left(\R_+^*\right)^n$ the matrix $\mathbb{I}_n-J_{S}({\bf y})$ is invertible because it is a $\mathbb{P}$-matrix. From Theorem \ref{existence-general-890} we obtain the existence of a fixed point for $S$. 
\end{proof}

Fianlly, we present an immediate consequence of the above theorem.

\begin{thm}
If $S:\left(\R_+^*\right)^n\to \R^n$ is a $\mathcal{C}^1$ function, it is antitone, for all ${\bf y}\in \left(\R_+^*\right)^n$ the matrix $(-J_{S}({\bf y}))$ is a $\mathbb{P}_0$-matrix, and for all $r\in \{1,\dots,n\}$ we have $\lim_{y_r\to 0}S_{r}({\bf y})=\infty$, then $S$ has a single fixed point. 
\end{thm}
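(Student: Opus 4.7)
The plan is to reduce this statement directly to Theorem \ref{unicity-S-99}, which already handles the case where $\mathbb{I}_n - J_S(\mathbf{y})$ is a $\mathbb{P}$-matrix. All that is needed is to show that the $\mathbb{P}_0$-hypothesis on $-J_S(\mathbf{y})$ forces $\mathbb{I}_n - J_S(\mathbf{y})$ to be a $\mathbb{P}$-matrix; the other two hypotheses (antitonicity and the boundary blow-up condition $\lim_{y_r\to 0}S_r(\mathbf{y})=\infty$) are already imposed verbatim.

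The key tool is Lemma 4.8.1 from \cite{johnson-smith-tsatsomeros}, which was already invoked in the preceding proof: if $A$ is a $\mathbb{P}_0$-matrix and $\varepsilon>0$, then $A+\varepsilon \mathbb{I}_n$ is a $\mathbb{P}$-matrix. Applying this with $A=-J_S(\mathbf{y})$ and $\varepsilon=1$, for every $\mathbf{y}\in \left(\R_+^*\right)^n$ the matrix
\[
\mathbb{I}_n - J_S(\mathbf{y}) \;=\; (-J_S(\mathbf{y})) + \mathbb{I}_n
\]
is a $\mathbb{P}$-matrix.

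With this in hand, the hypotheses of Theorem \ref{unicity-S-99} are satisfied: $S$ is $\mathcal{C}^1$ and antitone, $\mathbb{I}_n - J_S(\mathbf{y})$ is a $\mathbb{P}$-matrix for every $\mathbf{y}$, and $\lim_{y_r\to 0}S_r(\mathbf{y})=\infty$ for each $r\in\{1,\dots,n\}$. Therefore Theorem \ref{unicity-S-99} yields that $S$ has exactly one fixed point, which is the conclusion sought.

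There is essentially no obstacle here: the argument is a one-line application of the $\mathbb{P}_0$-to-$\mathbb{P}$ perturbation lemma, chained with the theorem proved just above. The only subtlety worth flagging is that this is the correct way to state the result, rather than directly deducing existence from Theorem \ref{existence-general-890} and uniqueness from the previous $\mathbb{P}_0$ theorem separately; but in fact, once $\mathbb{I}_n-J_S(\mathbf{y})$ is a $\mathbb{P}$-matrix it is automatically invertible (since $\mathbb{P}$-matrices have positive determinant), so the invertibility hypothesis required by Theorem \ref{existence-general-890} is also free, and the whole argument collapses into a single invocation of Theorem \ref{unicity-S-99}.
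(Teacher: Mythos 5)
Your proof is correct and follows exactly the route the paper intends: it presents this theorem as an immediate consequence of Theorem \ref{unicity-S-99}, with the $\mathbb{P}_0$-to-$\mathbb{P}$ step supplied by Lemma 4.8.1 of \cite{johnson-smith-tsatsomeros} (applied with $\varepsilon=1$ to $A=-J_S({\bf y})$), just as you do.
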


\section{The steady states of antitone electric systems}\label{main-1}

The steady states of the antitone electric system \eqref{ecuatia-fix-123} are the fixed points of the function $S_{{\bf k},M}$, defined in \eqref{functia-F}, where $M$ is a nonnegative matrix and ${\bf k}\in \R^n$. We observe that the function $S_{{\bf k},M}$ is bounded from below by ${\bf k}$.

\subsection{The antitone electric system for $n=1$}

For $k\in \R$ and $M\in \R_+$ we have $S_{k,M}(y)=k+\frac{M}{y}$.

The function $S_{k,0}$ has fixed points if and only if $k>0$. If $k>0$, then $\Phi_{S_{k,0}}=\{k\}$.

When $M>0$ the set of fixed points is $\Phi_{S_{k,M}}=\left\{\frac{k+\sqrt{k^2+4M}}{2}\right\}$. When $k>0$ we can write $\omega_{S_{k,M}\circ S_{k,M}}(k)=\omega_{S_{k,M}\circ S_{k,M}}(S_{k,M}(k))=\frac{k+\sqrt{k^2+4M}}{2}$ and the sequence $\text{Fit}_{S_{k,M}}(y_0)$ is convergent with the limit $\frac{k+\sqrt{k^2+4M}}{2}$.

\subsection{The case ${\bf k}\in \left(\R_+^*\right)^n$}

We start by presenting an important result for the studied problem.

\begin{lem}\label{dublu-imposibil-9}
If ${\bf k}\in \left(\R_+^*\right)^n$, then there is no vector ${\bf y}\in \left(\R_+^*\right)^n$ of order two\footnote{${\bf y}\in  \left(\R_+^*\right)^n$ is a vector of order two for $S_{{\bf k},M}$ if $S_{{\bf k},M}({\bf y})\neq {\bf y}$ and $S_{{\bf k},M}\circ S_{{\bf k},M}({\bf y})={\bf y}$.} for $S_{{\bf k},M}$ such that\footnote{${\bf y}\lneq {\bf z}$ if ${\bf y}\leq {\bf z}$ and ${\bf y}\neq {\bf z}$.} ${\bf y}\lneq S_{{\bf k},M}({\bf y})$.
\end{lem}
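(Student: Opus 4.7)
The plan is to argue by contradiction. Suppose ${\bf y} \in (\R_+^*)^n$ is a vector of order two for $S_{{\bf k},M}$ with ${\bf y} \lneq {\bf z}$, where I set ${\bf z} := S_{{\bf k},M}({\bf y})$, so that $S_{{\bf k},M}({\bf z}) = {\bf y}$. Reading these componentwise gives the two identities $y_i = k_i + \sum_j M_{ij}/z_j$ and $z_i = k_i + \sum_j M_{ij}/y_j$ for every $i$. Since ${\bf k} > 0$, $M \geq 0$, and ${\bf y},{\bf z} > 0$, both ${\bf y}$ and ${\bf z}$ dominate ${\bf k}$ componentwise and every quantity entering the subsequent manipulation is safely positive.

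The key move is to replace the additive difference ${\bf z} - {\bf y}$ with the multiplicative ratios $r_i := z_i/y_i$. Because ${\bf y} \leq {\bf z}$, each $r_i \geq 1$; because ${\bf y} \neq {\bf z}$, some $r_j$ is strictly greater than $1$, so the maximum $r_{i^*} := \max_i r_i$ is strictly greater than $1$. Subtracting the two componentwise identities at $i = i^*$ and rewriting $1/y_j - 1/z_j = (1 - 1/r_j)/y_j$ gives
$$z_{i^*} - y_{i^*} \;=\; \sum_j \frac{M_{i^*j}}{y_j}\left(1 - \frac{1}{r_j}\right).$$
Maximality of $r_{i^*}$ yields $1 - 1/r_j \leq 1 - 1/r_{i^*}$ for every $j$, and the bare sum $\sum_j M_{i^*j}/y_j$ equals $z_{i^*} - k_{i^*}$ by the second identity at $i^*$. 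Combining these observations produces
$$(r_{i^*}-1)\,y_{i^*} \;\leq\; \frac{r_{i^*}-1}{r_{i^*}}\bigl(r_{i^*}\,y_{i^*} - k_{i^*}\bigr).$$

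To finish, I divide by the positive factor $(r_{i^*}-1)/r_{i^*}$ and simplify, obtaining $r_{i^*} y_{i^*} \leq r_{i^*} y_{i^*} - k_{i^*}$, i.e., $k_{i^*} \leq 0$. This contradicts ${\bf k} \in (\R_+^*)^n$, so no such ${\bf y}$ can exist. I expect the main obstacle to be choosing the correct scalar quantity to extremize over the $n$ coordinates: the naive candidates (e.g., the largest difference $z_i - y_i$, or the smallest $y_i$) only exploit the nonnegativity of $M$ and never couple with ${\bf k}$, whereas the multiplicative ratio $z_i/y_i$ produces exactly the $-k_{i^*}$ term that makes the inequality collapse.
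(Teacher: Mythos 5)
Your proof is correct: the two componentwise identities are the right starting point, the maximum ratio $r_{i^*}=\max_i z_i/y_i>1$ is well defined because ${\bf y}\lneq{\bf z}$, the bound $1-1/r_j\le 1-1/r_{i^*}$ combines legitimately with the nonnegative weights $M_{i^*j}/y_j$, and substituting $\sum_j M_{i^*j}/y_j=z_{i^*}-k_{i^*}$ and $z_{i^*}=r_{i^*}y_{i^*}$ does collapse to $k_{i^*}\le 0$, contradicting ${\bf k}\in(\R_+^*)^n$. The route, however, is genuinely different from the paper's. The paper subtracts the two identities to get $N(\overline{\bf y}-{\bf y})=\overline{\bf y}-{\bf y}$ with $N=M\,\mathrm{diag}\frac{1}{{\bf y}\circ\overline{\bf y}}$ nonnegative, so that $\mathbb{I}_n-N$ is a \emph{singular} $\mathcal{Z}$-matrix, and then observes ${\bf k}=(\mathbb{I}_n-N)\overline{\bf y}$ with ${\bf k},\overline{\bf y}$ positive, which by the semipositivity condition $K_{33}$ of Plemmons' characterization forces $\mathbb{I}_n-N$ to be a \emph{nonsingular} $\mathcal{M}$-matrix --- a contradiction. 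Your maximum-ratio argument is in effect a self-contained, elementary proof of exactly the $\mathcal{M}$-matrix fact the paper imports: a $\mathcal{Z}$-matrix mapping a positive vector to a positive vector cannot annihilate a nonzero nonnegative vector. What each approach buys: the paper's version is shorter given the cited machinery and places the lemma inside standard $\mathcal{M}$-matrix theory (consistent with how the rest of the paper leans on $\mathbb{P}$/$\mathbb{P}_0$-matrix results), while yours needs no external theorem, works entirely at the level of scalar inequalities, and makes visible where positivity of ${\bf k}$ enters (as the $-k_{i^*}$ term that breaks the inequality).
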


\begin{proof}
Suppose that ${\bf y}\in  \left(\R_+^*\right)^n$ is a vector of order two for $S_{{\bf k},M}$ with ${\bf y}\lneq S_{{\bf k},M}({\bf y})$. If we note by $\overline{\bf y}:=S_{{\bf k},M}({\bf y})$, then we have 
\begin{equation}
{\bf k}+M\frac{1}{\bf y}=\overline{\bf y},\,\,
{\bf k}+M\frac{1}{\overline{\bf y}}={\bf y}.
\end{equation}
We obtain the equality $N(\overline{\bf y}-{\bf y})=\overline{\bf y}-{\bf y}$ with $N:=M\text{diag}\frac{1}{{\bf y}\circ \overline{\bf y}}$. The matrix $N$ is nonnegative and it has 1 as an eigenvalue with the nonnegative vector $\overline{\bf y}-{\bf y}$ as an eigenvector. We deduce that $\mathbb{I}_n-N$ is a singular $\mathcal{Z}$-matrix (see Appendix \ref{P-matrices-08}).
We can write
\begin{align*}
{\bf k}=\overline{\bf y}-M\frac{1}{\bf y}=\overline{\bf y}-N\text{diag}({\bf y}\circ \overline{\bf y})\frac{1}{\bf y}=\overline{\bf y}-N\text{diag}(\overline{\bf y})\text{diag}({\bf y})\frac{1}{\bf y}=(\mathbb{I}_n-N)\overline{\bf y}.
\end{align*}
The $\mathcal{Z}$-matrix $\mathbb{I}_n-N$ and the positive vector $\overline{\bf y}$ satisfy the condition $K_{33}$ from Theorem 1, \cite{plemmons}. We obtain that $\mathbb{I}_n-N$ is a nonsingular $\mathcal{M}$-matrix (see Appendix \ref{P-matrices-08}). Contradiction. 
\end{proof}
Following is presented the main result of this section.

\begin{thm}\label{existenta-unicitate-k-pozitiv}
If ${\bf k}\in \left(\R_+^*\right)^n$, then $\Phi_{S_{{\bf k},M}}$ has a single element ${\bf y}_{S_{{\bf k},M}}^{\Box}$ and for ${\bf y}_0\in \left(\R_+^*\right)^n$ the sequence $\emph{Fit}_{S_{{\bf k},M}}({\bf y}_0)$, given by the fixed point iteration method, is convergent and its limit is ${\bf y}_{S_{{\bf k},M}}^{\Box}$.
\end{thm}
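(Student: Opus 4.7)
The plan is to reduce the statement to Theorem \ref{global-attractiv-99} by verifying that $S_{{\bf k},M}$ is of type I. Since the paper already observed that $S_{{\bf k},M}$ is continuous, antitone, and bounded from below by ${\bf k}\in(\R_+^*)^n$, the only remaining condition to check is the equality $\omega_{S_{{\bf k},M}\circ S_{{\bf k},M}}({\bf k})=\omega_{S_{{\bf k},M}\circ S_{{\bf k},M}}(S_{{\bf k},M}({\bf k}))$.

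First I would set ${\bf y}_1:=\omega_{S_{{\bf k},M}\circ S_{{\bf k},M}}({\bf k})$ and ${\bf y}_2:=\omega_{S_{{\bf k},M}\circ S_{{\bf k},M}}(S_{{\bf k},M}({\bf k}))$. By Lemma \ref{ScircS-prop}-(iv) both ${\bf y}_1$ and ${\bf y}_2$ are fixed points of $S_{{\bf k},M}\circ S_{{\bf k},M}$, they lie in $[{\bf k},S_{{\bf k},M}({\bf k})]\subset(\R_+^*)^n$, they satisfy ${\bf y}_1\leq {\bf y}_2$, and they are mutual images: $S_{{\bf k},M}({\bf y}_1)={\bf y}_2$ and $S_{{\bf k},M}({\bf y}_2)={\bf y}_1$.

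The main step is then to rule out the strict inequality ${\bf y}_1\lneq {\bf y}_2$. Assume for contradiction that ${\bf y}_1\lneq {\bf y}_2$. Then ${\bf y}_1\neq S_{{\bf k},M}({\bf y}_1)={\bf y}_2$ while $S_{{\bf k},M}\circ S_{{\bf k},M}({\bf y}_1)={\bf y}_1$, so ${\bf y}_1$ is a vector of order two for $S_{{\bf k},M}$. Moreover ${\bf y}_1\lneq {\bf y}_2=S_{{\bf k},M}({\bf y}_1)$, which is precisely the situation forbidden by Lemma \ref{dublu-imposibil-9}. This contradiction forces ${\bf y}_1={\bf y}_2$, so $S_{{\bf k},M}$ is of type I in the sense of the classification preceding Theorem \ref{global-attractiv-99}.

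Once type I is established, Theorem \ref{global-attractiv-99} applies directly and yields both conclusions: $\Phi_{S_{{\bf k},M}}$ consists of the single point ${\bf y}_{S_{{\bf k},M}}^{\Box}={\bf y}_{S_{{\bf k},M}\circ S_{{\bf k},M}}^{\Box}$, and for every starting vector ${\bf y}_0\in(\R_+^*)^n$ the iteration $\text{Fit}_{S_{{\bf k},M}}({\bf y}_0)$ converges to this unique fixed point. The main obstacle is entirely front-loaded into Lemma \ref{dublu-imposibil-9}; once that lemma is in hand, the remainder of the argument is an almost mechanical invocation of the type I theory developed in Section \ref{sec-general-99}, and no further analysis of $M$ (beyond nonnegativity) or of ${\bf k}$ (beyond positivity) is needed.
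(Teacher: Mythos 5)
Your proposal is correct and follows essentially the same route as the paper: establish the inequality chain from Lemma \ref{ScircS-prop}-(iv), rule out strictness via Lemma \ref{dublu-imposibil-9} (you merely spell out the order-two contradiction that the paper leaves implicit), conclude that $S_{{\bf k},M}$ is of type I, and invoke Theorem \ref{global-attractiv-99}. No gaps.
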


\begin{proof}
From Lemma \ref{ScircS-prop} we have 
$$\omega_{S_{{\bf k},M}\circ S_{{\bf k},M}}({\bf k})=S_{{\bf k},M}(\omega_{S_{{\bf k},M}\circ S_{{\bf k},M}}(S_{{\bf k},M}({\bf k})))\leq \omega_{S_{{\bf k},M}\circ S_{{\bf k},M}}(S_{{\bf k},M}({\bf k}))=S_{{\bf k},M}(\omega_{S_{{\bf k},M}\circ S_{{\bf k},M}}({\bf k})).$$ 
From Lemma \ref{dublu-imposibil-9} we obtain the equality $\omega_{S_{{\bf k},M}\circ S_{{\bf k},M}}({\bf k})=\omega_{S_{{\bf k},M}\circ S_{{\bf k},M}}(S_{{\bf k},M}({\bf k}))$. The announced result is a consequence of Theorem \ref{global-attractiv-99}. 
\end{proof}

\subsection{The general case}

For $M$ a nonnegative matrix and ${\bf k}\in \R^n$ the function $S_{{\bf k},M}$ can have one or more fixed points. Also, there are situations where it has no fixed points (see Example \ref{ex-190}).

First, we present some elementary results about the function $S_{{\bf k},M}$.

\begin{lem}\label{prop-SkM-012}
Let be $M\in \mathcal{M}_n(\R)$ a nonnegative matrix and ${\bf k}\in \R^n$. We note by $S_1,\dots, S_n$ the components of $S_{{\bf k},M}$. The following statements hold. 
\begin{enumerate}[(i)]
\item If $r\in \{1,\dots, n\}$ and $M_{rr}>0$, then $\lim_{y_r\to 0}S_r({\bf y})=\infty$. 
\item The Jacobian function of $S_{{\bf k},M}$ is $J_{S_{{\bf k},M}}({\bf y})=-M\emph{diag}\frac{1}{{\bf y}\circ{\bf y}}$. 
\end{enumerate}
\end{lem}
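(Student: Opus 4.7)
The plan is to verify both claims by direct componentwise computation from the defining formula $S_{{\bf k},M}({\bf y})={\bf k}+M\frac{1}{\bf y}$. Writing out the $r$-th coordinate gives $S_r({\bf y})=k_r+\sum_{j=1}^n M_{rj}/y_j$ for each $r\in\{1,\dots,n\}$, which is the only expression I will need throughout.

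For part (i), I will split off the diagonal contribution and write $S_r({\bf y}) = k_r + \frac{M_{rr}}{y_r} + \sum_{j\neq r}\frac{M_{rj}}{y_j}$. Since $M$ is nonnegative, each term of the trailing sum is $\geq 0$ for any fixed choice of positive $y_j$ with $j\neq r$, so $S_r({\bf y}) \geq k_r + M_{rr}/y_r$. Because $M_{rr}>0$, this lower bound tends to $+\infty$ as $y_r\to 0^+$, and the conclusion $\lim_{y_r\to 0}S_r({\bf y})=\infty$ follows.

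For part (ii), I will differentiate componentwise. The map $y_s\mapsto 1/y_s$ is smooth on $(0,\infty)$ with derivative $-1/y_s^2$, so $\partial S_r/\partial y_s = -M_{rs}/y_s^2$. Assembling these entries into a matrix yields the factorization $J_{S_{{\bf k},M}}({\bf y}) = -M\,\text{diag}\!\left(1/({\bf y}\circ{\bf y})\right)$, which is the claimed identity.

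There is no genuine obstacle here: both statements are one-line consequences of the explicit formula for $S_{{\bf k},M}$. The reason the lemma is worth recording is strategic rather than technical. Part (i) supplies exactly the blow-up hypothesis $\lim_{y_r\to 0}S_r({\bf y})=\infty$ that feeds into Theorem \ref{existence-general-890} and Theorem \ref{unicity-S-99}, while part (ii) provides a closed-form expression for $J_{S_{{\bf k},M}}$ that makes it straightforward to test when $\mathbb{I}_n-J_{S_{{\bf k},M}}({\bf y})$ is a $\mathbb{P}$-matrix or $-J_{S_{{\bf k},M}}({\bf y})$ is a $\mathbb{P}_0$-matrix, which are the criteria invoked in the main results of Section \ref{main-1}.
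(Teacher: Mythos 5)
Your computation is correct and is exactly the elementary argument the paper has in mind: the paper states this lemma without proof as an "elementary result," and your direct componentwise bound for (i) and entrywise differentiation for (ii) are the standard verification. Nothing is missing.
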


Complementing the existence and uniqueness result demonstrated in Theorem \ref{existenta-unicitate-k-pozitiv}, for ${\bf k}\in \left(\R_+^*\right)^n$, we present a uniqueness result and an existence result for the general case. 
These results are obtained when $M$ is a nonnegative $\mathbb{P}_0$-matrix.

\begin{thm}\label{existenta-unicitate-electric-11}
Let be $M\in \mathcal{M}_n(\R)$ a nonnegative $\mathbb{P}_0$-matrix and let be ${\bf k}\in \R^n$.
\begin{enumerate}[(i)]
\item $S_{{\bf k},M}$ has at most one fixed point.
\item If all the diagonal entries of $M$ are positive ($M_{ii}>0$, $i\in  \{1,\dots,n\}$), then  $S_{{\bf k},M}$ has a single fixed point.
\end{enumerate}
\end{thm}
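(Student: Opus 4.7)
The plan is to reduce both statements to the general $\mathbb{P}_0$-results established at the end of Section \ref{sec-general-99}, the workhorse being the Jacobian formula from Lemma \ref{prop-SkM-012}(ii).

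For (i), I will verify that $-J_{S_{{\bf k},M}}({\bf y})$ is a $\mathbb{P}_0$-matrix for every ${\bf y}\in\left(\R_+^*\right)^n$, and then invoke the $\mathbb{P}_0$-uniqueness theorem stated immediately after Theorem \ref{unicity-S-78}. By Lemma \ref{prop-SkM-012}(ii) we have $-J_{S_{{\bf k},M}}({\bf y}) = MD$, where $D := \text{diag}\frac{1}{{\bf y}\circ{\bf y}}$ is a positive diagonal matrix. The decisive observation is that for every index set $\alpha\subseteq\{1,\dots,n\}$ the principal submatrix $(MD)[\alpha,\alpha]$ factors as $M[\alpha,\alpha]\,D[\alpha,\alpha]$, so its determinant equals $\det(M[\alpha,\alpha])\det(D[\alpha,\alpha])\geq 0$, the first factor being nonnegative because $M$ is $\mathbb{P}_0$ and the second being a product of positive numbers. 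Hence $MD$ is $\mathbb{P}_0$, and at most one fixed point can exist.

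For (ii), the extra hypothesis $M_{ii}>0$ for all $i$ combined with Lemma \ref{prop-SkM-012}(i) gives $\lim_{y_r\to 0}S_r({\bf y})=\infty$ for each $r\in\{1,\dots,n\}$. Since $S_{{\bf k},M}$ is $\mathcal{C}^1$ and antitone and $-J_{S_{{\bf k},M}}({\bf y})$ is $\mathbb{P}_0$ by the argument above, the final theorem of Section \ref{sec-general-99} (the $\mathbb{P}_0$-version of Theorem \ref{unicity-S-99}) applies directly and produces a single fixed point. Equivalently, one could cite Theorem \ref{existence-general-890} for existence (note that $\mathbb{I}_n-J_{S_{{\bf k},M}}({\bf y}) = \mathbb{I}_n + MD$ is a $\mathbb{P}$-matrix by Lemma 4.8.1 of \cite{johnson-smith-tsatsomeros}, hence invertible) and combine it with part (i) for uniqueness.

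The only step that is not a direct citation is the factorization of principal minors of $MD$; it is short but it is the pivot of the whole argument, so I would state and prove it explicitly. I do not anticipate any serious obstacle beyond that verification.
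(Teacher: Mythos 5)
Your argument is correct, and it follows the same overall skeleton as the paper: compute the Jacobian via Lemma \ref{prop-SkM-012}(ii) and reduce both claims to the general fixed-point theorems of Section \ref{sec-general-99}. The one place where you diverge is the matrix-theoretic step. The paper invokes Lemma \ref{caract-P_0} (the equivalence ``$M$ is $\mathbb{P}_0$ $\Leftrightarrow$ $\mathbb{I}_n+M\,\mathrm{diag}({\bf x})$ is a $\mathbb{P}$-matrix for all positive ${\bf x}$''), whose proof in the Appendix goes through the determinant expansion \eqref{det-diag-11} and, for the implication used here, an approximation of $M$ by $\mathbb{P}$-matrices together with results from \cite{johnson-smith-tsatsomeros}; it then applies Theorems \ref{unicity-S-78} and \ref{unicity-S-99} in their $\mathbb{P}$-matrix form. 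You instead verify directly that $-J_{S_{{\bf k},M}}({\bf y})=M\,\mathrm{diag}\frac{1}{{\bf y}\circ{\bf y}}$ is a $\mathbb{P}_0$-matrix by the principal-minor factorization $\det\bigl((MD)[\boldsymbol{\alpha},\boldsymbol{\alpha}]\bigr)=\det\bigl(M[\boldsymbol{\alpha},\boldsymbol{\alpha}]\bigr)\det\bigl(D[\boldsymbol{\alpha},\boldsymbol{\alpha}]\bigr)$ (valid precisely because $D$ is diagonal), and then cite the $\mathbb{P}_0$-formulated corollaries of Theorems \ref{unicity-S-78} and \ref{unicity-S-99}; those corollaries fold in Lemma 4.8.1 of \cite{johnson-smith-tsatsomeros} for you, so nothing is missing. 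Your factorization is a more elementary and self-contained verification of the key hypothesis and would in fact give a shorter proof of the relevant direction of Lemma \ref{caract-P_0} for this special case, while the paper's route buys the full equivalence of Lemma \ref{caract-P_0}, which it reuses elsewhere (e.g.\ in the remark that invertibility of $\mathbb{I}_n-J_{S_{{\bf k},M}}({\bf y})$ for all ${\bf y}$ is equivalent to $M$ being $\mathbb{P}_0$). Your alternative closing of (ii) via Theorem \ref{existence-general-890} plus part (i) is also sound, since a $\mathbb{P}$-matrix has positive determinant and is therefore invertible.
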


\begin{proof}
$(i)$ For ${\bf y}\in \left(\R_+^*\right)^n$ we have $\mathbb{I}_n-J_{S_{{\bf k},M}}({\bf y})=\mathbb{I}_n+M\text{diag}\frac{1}{{\bf y}\circ{\bf y}}$. Because $M$ is a $\mathbb{P}_0$-matrix, using Lemma \ref{caract-P_0}, we obtain that $\mathbb{I}_n-J_{S_{{\bf k},M}}({\bf y})$ is a $\mathbb{P}$-matrix. From Theorem \ref{unicity-S-78} we deduce that $S_{{\bf k},M}$ has at most one fixed point.

$(ii)$ Using Lemma \ref{prop-SkM-012} we have that $\lim_{y_r\to 0}S_r({\bf y})=0$ for all $r\in \{1,\dots, n\}$. The announced result is an immediate consequence of Theorem \ref{unicity-S-99}.
\end{proof}

\begin{rem} In Example \ref{ex-657x}, a particular case is presented with $M$ a nonnegative $\mathbb{P}_0$-matrix that has some zeros on the diagonal. In this case, there are ${\bf k}\in \R^n$ so that the function $S_{{\bf k},M}$ has no fixed points.
\end{rem}

\begin{rem}
From Lemma \ref{caract-P_0} we have that the requirement that the matrix $\mathbb{I}_n-J_{S_{{\bf k},M}}({\bf y})=\mathbb{I}_n+M\text{diag}\frac{1}{{\bf y}\circ{\bf y}}$ be invertible, for all ${\bf y}\in \left(\R_+^*\right)^n$, is equivalent to the requirement that $M$ be a $\mathbb{P}_0$-matrix. In this case, the hypotheses of Theorem \ref{existence-general-890} are equivalent to the hypotheses of Theorem \ref{unicity-S-99}.
\end{rem}

\subsection{Conclusions} 

The antitone electric system \eqref{ecuatia-fix-123} is characterized by a nonnegative matrix $M\in \mathcal{M}_n(\R)$ and a vector ${\bf k}\in \R^n$. The solutions of \eqref{ecuatia-fix-123} are the fixed points of the antitone functions $S_{{\bf k},M}$ given by \eqref{functia-F}. 

When ${\bf k}\in \left(\R_+^*\right)^n$ the function $S_{{\bf k},M}$ has a single fixed point which is attractive for the fixed point iteration method (Theorem \ref{existenta-unicitate-k-pozitiv}).  

For an arbitrary vector ${\bf k}$ the uniqueness is ensured by the condition that $M$ is a nonnengative $\mathbb{P}_0$-matrix. For existence, we add the additional condition that the elements on the diagonal are positive (Theorem \ref{existenta-unicitate-electric-11}). Failure to fulfill the previous requirements may lead to the non-existence of fixed points or to the existence of several fixed points. Two distinct fixed points are incomparable (Lemma \ref{incomparable-789}). 

\section{Examples}\label{examples-22}

In this section, we present some examples to illustrate the previous theoretical results.

\begin{exmp}\label{exemplu-complex-55} Let be the continuous antitone function $S:\R_+^*\to \R$ given by $S(y)=\begin{cases} 3-y & y\in (0,2) \\ 1 & y\in [2,\infty)\end{cases}$. We observe that we have $S\circ S(y)=\begin{cases} 1 & y\in (0,1) \\ y & y\in [1,2] \\ 2 & y\in (2,\infty)\end{cases}$, $\Phi_S=\{1.5\}$, and $\Phi_{S\circ S}=[1,2]$. 

A fixed point iteration sequence has the following $\omega$-limit set:
\begin{enumerate}[(i)]
\item $\omega_S(y_0)=\{1,2\}$, for $y_0\in (0,1]\cup [2,\infty)$;
\item $ \omega_S(y_0)=\{y_0,3-y_0\}$, for $y_0\in (1,2)$. 
\end{enumerate}
The $\omega$-limit sets  of the functions $S$ and $S\circ S$ are $\Omega_S=\Omega_{S\circ S}=[1,2]$.
\end{exmp}

\begin{exmp}\label{exemplu-complex-2D-09} Let be the continuous antitone function $S:\left(\R_+^*\right)^2\to \R^2$ with the components $$S_1(y_1,y_2)=\begin{cases} 3-y_2 & y_2\in (0,2) \\ 1 & y_2\in [2,\infty)\end{cases}\,\,\text{and}\,\, S_2(y_1,y_2)=\begin{cases} 3-y_1 & y_1\in (0,2) \\ 1 & y_1\in [2,\infty)\end{cases}.$$
We obtain the equalities
$$S_1 (S(y_1,y_2))=\begin{cases} 1 & y_1\in (0,1) \\ y_1 & y_1\in [1,2] \\ 2 & y_1\in (2,\infty)\end{cases}\,\,\text{and}\,\,S_2(S(y_1,y_2))=\begin{cases} 1 & y_2\in (0,1) \\ y_2 & y_2\in [1,2] \\ 2 & y_2\in (2,\infty)\end{cases}.$$
The sets of the fixed points of $S$ and $S\circ S$ are 
$$\Phi_S=\{(y_1,y_2)^t\,|\,1\leq y_1,y_2\leq 2\,\text{and}\,y_1+y_2=3\},\,\,\Phi_{S\circ S}=[1,2]\times [1,2].$$
\begin{figure}[h!]
\includegraphics[width=10cm]{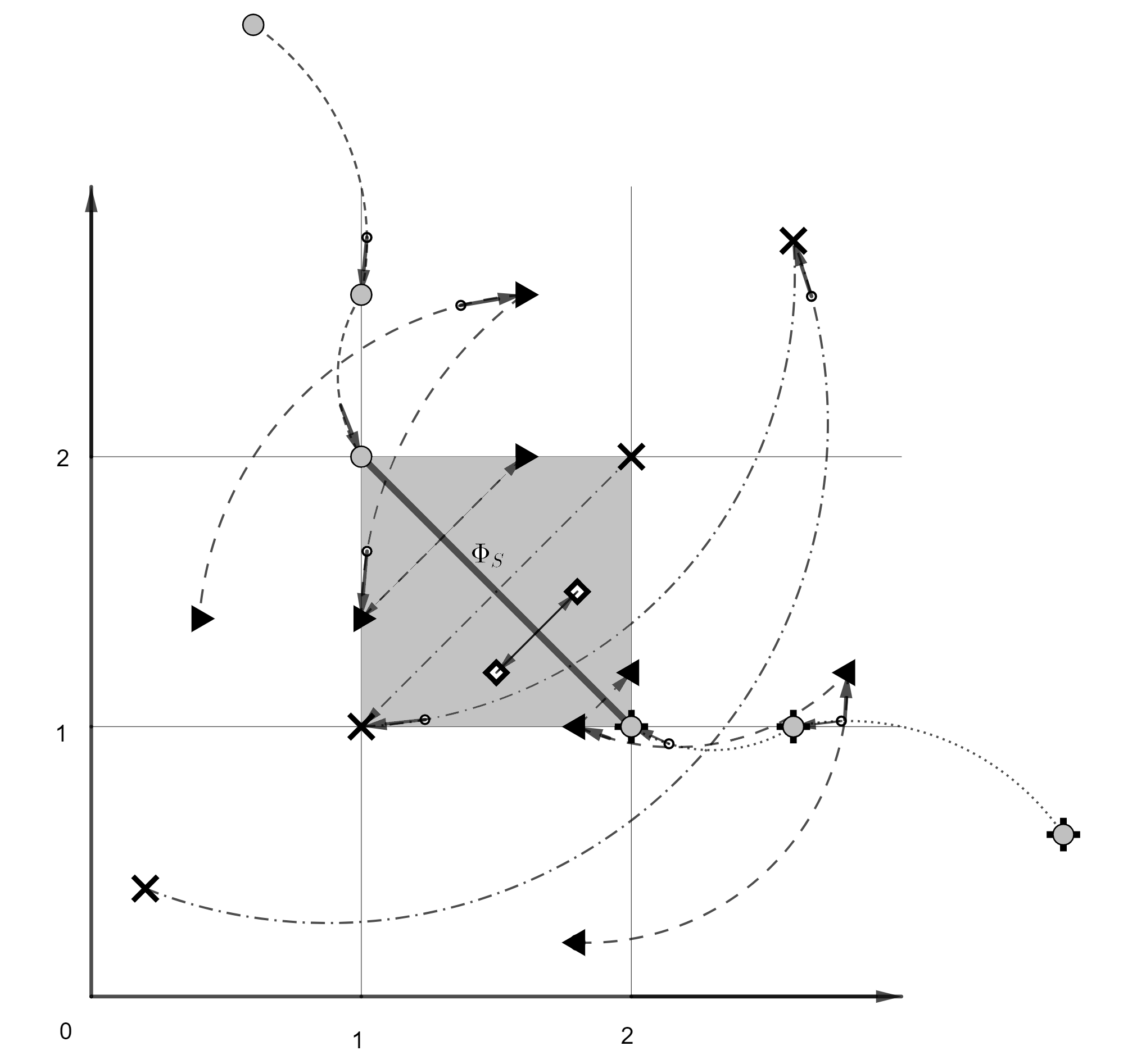}
\centering
\caption{{\small The dynamics generated by the fixed point iteration method for Example \ref{exemplu-complex-2D-09}  (drawing made with GeoGebra).}}\label{ex-general-complex-2D}
\centering
\end{figure}
In Figure \ref{ex-general-complex-2D} is presented the dynamics generated by the fixed point iteration method and the continuous antitone function $S$.
\end{exmp}

\begin{exmp}\label{ex-190}
We consider an antitone electric system which is generated by the function $S_{{\bf k},M}$ with $n=2$, $M=\begin{pmatrix} a & 1 \\ 1 & 1 \end{pmatrix}$, $a\geq 0$. We observe that the symmetric matrix $M$ is a $\mathbb{P}_0$-matrix if and only if $a\geq 1$. Also, $\mathbb{I}_n-J_{S_{{\bf k},M}}({\bf y})$ is a $\mathbb{P}$-matrix, for all ${\bf y}\in \left(\R_+^*\right)^n$, if and only if $a\geq 1$. 

{\bf I: ${\bf k}=(1,1)^t$.} 
$S_{{\bf k},M}$ has a single fixed point which is globally attractive for the fixed point iteration method. Figure \ref{ex-1901-2D} presents the sequence $\text{Fit}_{S_{{\bf k},M}}({\bf k})$ and the fixed point of $S_{{\bf k},M}$ when $a=5$.
\begin{figure}[h!]
\includegraphics[width=8cm]{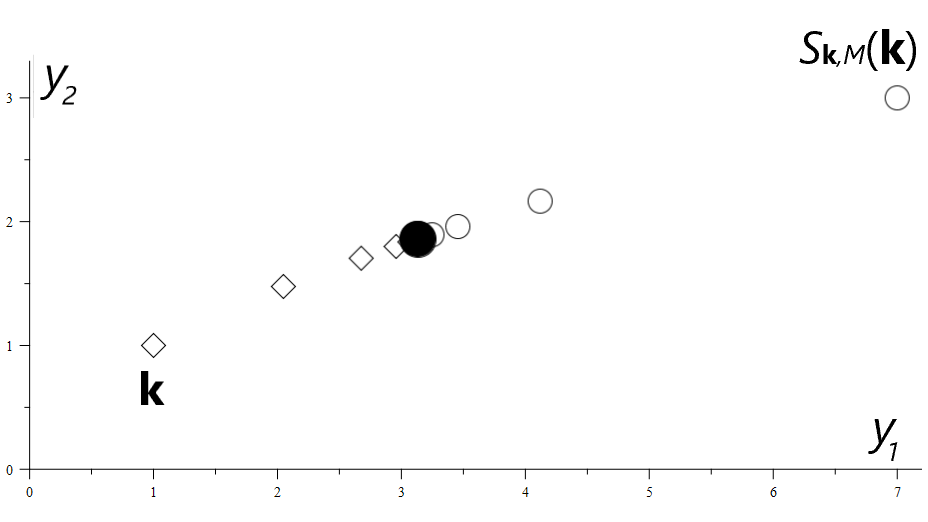}
\centering
\caption{{\small Example \ref{ex-190}/{\bf I}: $\text{Fit}_{S_{{\bf k},M}}({\bf k})$ and the fixed point of $S_{{\bf k},M}$ for $a=5$; $\diamond$ - the terms of $\text{Fit}_{S_{{\bf k},M}\circ S_{{\bf k},M}}({\bf k})$,  $\circ$ - the terms of $\text{Fit}_{S_{{\bf k},M}\circ S_{{\bf k},M}}(S_{{\bf k},M}({\bf k}))$, (drawing made with Maple).}}\label{ex-1901-2D}
\centering
\end{figure}

{\bf II: ${\bf k}=(-9, -10)^t$}.
For $a=0$ the function $S_{{\bf k},M}$ has not fixed points. 

For $a\in (0,0.68...)\cup (0.83...,\infty)$ the function $S_{{\bf k},M}$ has one fixed point. 

For $a\in \{0.68..., 0.83...\}$ the function $S_{{\bf k},M}$ has 2 fixed points.

For $a\in (0.68..., 0.83...)$ the function $S_{{\bf k},M}$ has 3 fixed points.
\end{exmp}

\begin{exmp}\label{ex-657x}
We consider an antitone electric system which is generated the function $S_{{\bf k},M}$ with ${\bf k}=\begin{pmatrix} k \\ 0 \\ 0 \end{pmatrix}\in \R^3$ and $M=\begin{pmatrix} 0 & 0 & 1 \\ 1 & 0 & 1 \\ 0 & 0 & 1 \end{pmatrix}$. We observe that $M$ is a $\mathbb{P}_0$-matrix with some diagonal entries equal to zero.  For $k\leq -1$ the function $S_{{\bf k},M}$ have not fixed points. For $k>-1$ the only fixed point of $S_{{\bf k},M}$ is $(k+1, \frac{k+2}{k+1}, 1)^t$.
\end{exmp}

\appendix

\section{The fixed point iteration sequences, and the $\omega$-limit set}\label{fixed-point-iteration-section}

Let $T:\left(\R_+^*\right)^n\to \left(\R_+^*\right)^n$ be a continuous function. 
The fixed point iteration sequence $\text{Fit}_{T}({\bf y}_0):=({\bf y}_r)_{r\in \mathbb{N}}$ starts from the vector  ${\bf y}_0\in \left(\R_+^*\right)^n$ and it verifies the iteration ${\bf y}_{r+1}=T({\bf y}_r)$, $r\in \mathbb{N}$. 

The $\omega$-limit set of $\text{Fit}_{T}({\bf y}_0)$ is defined by 
\begin{equation}
\omega_T({\bf y}_0)=\{{\bf y}\in \R_+^n\,|\,\exists\,\text{the subsequence}\, ({\bf y}_{k_q})_{q\in \mathbb{N}}\,\text{of}\, \text{Fit}_{T}({\bf y}_0)\, \text{such that} \,{\bf y}_{k_q}\stackrel{q\to \infty}{\longrightarrow} {\bf y}\}.
\end{equation}
From the continuity of $T$ we obtain that the set $\omega_T({\bf y}_0)$ is invariant under $T$. When $\text{Fit}_{T}({\bf y}_0)$ is bounded from above by ${\bf w}$, then it is contained in the compact set $[{\bf 0},{\bf w}]$, and, consequently, $\omega_T({\bf y}_0)\neq \emptyset$. 
When $\text{Fit}_{T}({\bf y}_0)$ is convergent we denote by $\omega_T({\bf y}_0)$ its limit. If $\text{Fit}_{T}({\bf y}_0)$ is convergent and $\omega_T({\bf y}_0)\in \left(\R_+^*\right)^n$, then $\omega_T({\bf y}_0)$ is a fixed point of $T$. 
 
The $\omega$-limit set of $T$ (with respect to the fixed point iteration method) is:
\begin{equation}
\Omega_T=\{{\bf y}\in \R_+^n\,|\,\exists {\bf y}_0\in \left(\R_+^*\right)^n\,\text{such that}\,{\bf y}\in \omega_T({\bf y}_0)\}.
\end{equation}
It is easy to observe that the set of fixed points of $T$ is a subset of $\Omega_T$.

\section{$\mathcal{Z}$-matrix, $\mathcal{M}$-matrix, $\mathbb{P}$-matrix and $\mathbb{P}_0$-matrix}\label{P-matrices-08}

\begin{defn}[see \cite{plemmons}]
Let be $M=[M_{ij}]\in \mathcal{M}_n(\R)$.
\begin{enumerate}[(i)]
\item $M$ is a $\mathcal{Z}$-matrix if for all $i,j\in \{1,\dots,n\}$ with $i\neq j$ we have $M_{ij}\leq 0$.
\item $M$ is an $\mathcal{M}$-matrix if it can be expressed in the form $M = s\mathbb{I}_n- B$, where $B$ is a nonnegative matrix and $s \geq \rho(B)$.
\end{enumerate}
\end{defn}

An $\mathcal{M}$-matrix is a $\mathcal{Z}$-matrix. 

\begin{defn}[see \cite{johnson-smith-tsatsomeros}]
Let be $M\in \mathcal{M}_n(\R)$.
\begin{enumerate}[(i)]
\item $M$ is a $\mathbb{P}_0$-matrix if all its principal minors are nonnegative. 
\item $M$ is a $\mathbb{P}$-matrix if all its principal minors are positive. 
 \end{enumerate}
\end{defn}

We observe that a $\mathbb{P}$-matrix is a $\mathbb{P}_0$-matrix. The paper \cite{johnson-smith-tsatsomeros} presents many properties of the set of $\mathbb{P}$-matrices. A matrix $M\in \mathcal{M}_n(\R)$ with the symmetric part $M^s=\frac{1}{2}(M+M^t)$ a positive definite matrix is a $\mathbb{P}$-matrix.  An invertible $\mathcal{M}$-matrix $M\in \mathcal{M}_n(\R)$ is a $\mathbb{P}$-matrix. If $M$ is an invertible $\mathbb{P}$-matrix, then $M^{-1}$ is a $\mathbb{P}$-matrix. 

The set of $\mathbb{P}_0$-matrices is the topological closure of the set of $\mathbb{P}$-matrices in $\mathcal{M}_n(\R)$. A matrix $M\in \mathcal{M}_n(\R)$ is a $\mathbb{P}_0$-matrix if and only if every real eigenvalue of every principal submatrix of $M$ is nonnegative (Theorem 4.8.2, \cite{johnson-smith-tsatsomeros}).  If $M$ is an invertible $\mathbb{P}_0$-matrix, then $M^{-1}$ is a $\mathbb{P}_0$-matrix. 

To be able to work with $\mathbb{P}$-matrices and $\mathbb{P}_0$-matrices, we use the following notations:
\begin{enumerate}[(i)]
\item $\Sigma_{r}=\{\boldsymbol{\alpha}=(\alpha_1,\dots,\alpha_r)\in \{1,\dots,n\}^r\,|\,\alpha_1<\dots<\alpha_r\}$, $n\in \mathbb{N}^*$ and $r\in \{1,\dots,n\}$.
\item The complement of $\boldsymbol{\alpha}\in \Sigma_r$, $r\in \{1,\dots,n-1\}$, is $\boldsymbol{ \alpha^c}\in \Sigma_{n-r}$ formed with the elements of $\{1,\dots,n\}\backslash\{\alpha_1,\dots,\alpha_r\}$. 
\item $\|\boldsymbol{\alpha}\|=\sum_{s=1}^r\alpha_s$ for $\boldsymbol{\alpha}\in \Sigma_{r}$.
\item ${\bf x}_{\boldsymbol{\alpha}}:=\prod_{s=1}^{r} x_{\alpha_s}$ for ${\bf x}:=(x_1,\dots,x_n)^t\in \R^n$, $r\in \{1,\dots,n\}$, and $\boldsymbol{\alpha}\in \Sigma_{r}$.
\item $A[\boldsymbol{\alpha},\boldsymbol{\beta}]$ is the submatrix of $A\in \mathcal{M}_n(\mathbb{C})$,  
lying in rows $\alpha_1,\dots,\alpha_r$ and columns $\beta_1,\dots,\beta_r$, where $r\in \{1,\dots,n\}$, $\boldsymbol{\alpha}, \boldsymbol{\beta}\in \Sigma_{r}$.
\item  $A(\boldsymbol{\alpha},\boldsymbol{\beta}):=A[\boldsymbol{\alpha}^{\bf c},\boldsymbol{\beta}^{\bf c}]$ is the submatrix obtained from $A$ by deleting the rows $\alpha_1,\dots,\alpha_r$ and columns $\beta_1,\dots,\beta_r$.
\end{enumerate}

For $A,B\in \mathcal{M}_n(\mathbb{C})$ we have the following formula, see \cite{marcus},
\begin{equation}\label{determinant-formula-marcus}
\det(A+B)=\det(A)+\det(B)+\sum_{r=1}^{n-1}\sum_{\boldsymbol{\alpha},\boldsymbol{\beta}\in \Sigma_r}(-1)^{\|\boldsymbol{\alpha}\|+\|\boldsymbol{\beta}\|}\det A[\boldsymbol{\alpha},\boldsymbol{\beta}]\det B(\boldsymbol{\alpha},\boldsymbol{\beta}).
\end{equation}
When $B=\text{diag}({\bf x})\in \mathcal{M}_n(\mathbb{C})$ we observe that  $\det B(\boldsymbol{\alpha},\boldsymbol{\beta})=0$ for $\boldsymbol{\alpha}\neq\boldsymbol{\beta}$ and $\det B(\boldsymbol{\alpha},\boldsymbol{\alpha})={\bf x}_{\boldsymbol{\alpha^c}}:=\prod_{s=1}^{n-r} x_{\alpha^{\bf c}_s}$. 
The formula \eqref{determinant-formula-marcus} becomes
\begin{equation}\label{det-diag-11}
\det(A+\text{diag}({\bf x}))=P_A({\bf x}),
\end{equation}
where  the multivariate polynomial $P_A\in \R[X_1,\dots,X_n]$ is given by 
\begin{equation}\label{polinom-A-675}
P_A({\bf X})=\det(A)+\prod_{s=1}^nX_s+\sum_{r=1}^{n-1}\sum_{\boldsymbol{\alpha}\in \Sigma_r}(\det A[\boldsymbol{\alpha},\boldsymbol{\alpha}]){\bf X}_{\boldsymbol{\alpha^c}}.
\end{equation}

\begin{rem}
In the case when $\text{diag}({\bf x})=-\lambda\mathbb{I}_n$, $\lambda\in \mathbb{C}$, we obtain  the formula of the characteristic polynomial of $A$.
\end{rem}

The following result helps us to determine a characterization of the $\mathbb{P}_0$-matrices.

\begin{lem}\label{lemma-polynomial-11}
Let be $P\in \R[X_1,\dots,X_n]$ of the form $P({\bf X})=X_1\cdot...\cdot X_n+\sum\limits_{r=1}^{n-1}\sum\limits_{\boldsymbol{\alpha}\in \Sigma_r}p_{\boldsymbol{\alpha}}{\bf X}_{\boldsymbol{\alpha}}+p_0$.
The following statements are equivalent:
\begin{enumerate}[(i)]
\item All the coefficients of $P$ are nonnegative real numbers.
\item $P({\bf x})> 0$ for all ${\bf x}\in \left(\R_+^*\right)^n$.
\item $P({\bf x})\geq  0$ for all ${\bf x}\in \R_+^n$. 
\end{enumerate}
\end{lem}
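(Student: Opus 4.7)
The plan is to establish the chain of implications $(i) \Rightarrow (ii) \Rightarrow (iii) \Rightarrow (i)$. The first two implications are essentially routine; the real content is in the third, where I need a way to isolate each coefficient $p_{\boldsymbol{\alpha}}$ from the values of $P$ on $\R_+^n$.

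For $(i) \Rightarrow (ii)$, I would observe that when $\mathbf{x} \in \left(\R_+^*\right)^n$ every monomial $\mathbf{x}_{\boldsymbol{\alpha}}$ is strictly positive, each summand $p_{\boldsymbol{\alpha}} \mathbf{x}_{\boldsymbol{\alpha}}$ is nonnegative under hypothesis $(i)$, and the leading monomial $x_1 \cdots x_n$ (whose coefficient is $1$) contributes a strictly positive amount. Hence $P(\mathbf{x}) > 0$. For $(ii) \Rightarrow (iii)$, I would invoke continuity of the polynomial $P$ together with the fact that $\left(\R_+^*\right)^n$ is dense in $\R_+^n$.

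The interesting implication is $(iii) \Rightarrow (i)$. Here the strategy is a coefficient-extraction trick based on the squarefree, set-indexed structure of $P$. First, $p_0 = P(\mathbf{0}) \geq 0$. Next, fix $r \in \{1, \dots, n-1\}$ and $\boldsymbol{\gamma} = (\gamma_1, \dots, \gamma_r) \in \Sigma_r$, and consider the restriction of $P$ to the ray
\begin{equation*}
x_{\gamma_s} = t \;(s = 1, \dots, r), \qquad x_i = 0 \;(i \notin \{\gamma_1, \dots, \gamma_r\}).
\end{equation*}
Since $r < n$, the top monomial $x_1 \cdots x_n$ vanishes, and the only surviving monomials $\mathbf{x}_{\boldsymbol{\alpha}}$ are those whose index set is contained in $\{\gamma_1, \dots, \gamma_r\}$, each evaluating to $t^{|\boldsymbol{\alpha}|}$. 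Thus the restriction becomes a univariate polynomial $Q(t) = p_{\boldsymbol{\gamma}} t^r + (\text{lower-degree terms in } t) + p_0$ of degree at most $r$ in $t$, with $p_{\boldsymbol{\gamma}}$ as its degree-$r$ coefficient. Because $Q(t) \geq 0$ for every $t \geq 0$ by hypothesis $(iii)$, letting $t \to \infty$ forces $p_{\boldsymbol{\gamma}} \geq 0$ (otherwise $Q(t) \to -\infty$). Running this argument over every $\boldsymbol{\gamma} \in \Sigma_r$ and every $r$ yields $(i)$.

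The main technical point to get right is not difficulty but bookkeeping: I need to check that the chosen substitution really does kill every monomial except those indexed by subsets of $\{\gamma_1, \dots, \gamma_r\}$, so that $p_{\boldsymbol{\gamma}}$ appears cleanly as the leading coefficient of $Q$. This is immediate from the squarefree, subset-indexed form prescribed for $P$, so no genuine obstacle arises.
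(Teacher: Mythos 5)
Your proposal is correct and takes essentially the same approach as the paper: the implications $(i)\Rightarrow(ii)$ and $(ii)\Rightarrow(iii)$ are handled identically, and for $(iii)\Rightarrow(i)$ both arguments set the coordinates outside a chosen index set to zero and exploit growth at infinity to force the corresponding coefficient to be nonnegative. The only difference is cosmetic: the paper runs an induction on the cardinality of $\boldsymbol{\alpha}$, rearranging the multivariate inequality and letting the supported coordinates grow, whereas your diagonal substitution $x_{\gamma_s}=t$ turns the restriction into a univariate polynomial whose leading coefficient is exactly $p_{\boldsymbol{\gamma}}$, extracting each coefficient directly without induction --- a slight streamlining rather than a different method.
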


\begin{proof}
$(i)\Rightarrow (ii)$ It is easy to notice that if all the coefficients of $P$ are nonnegative then $P({\bf x})> 0$, ${\bf x}\in \left(\R_+^*\right)^n$. 

By using the continuity of $P$ we obtain $(ii)\Rightarrow (iii)$.

$(iii)\Rightarrow (i)$ Because $P({\bf 0})=p_0$ we obtain $p_0\geq 0$. We have $P(0,\dots,0,x_j,0,\dots,0)=p_jx_j+p_0\geq 0$, $\forall x_j\geq 0$, if and only if $p_j\geq 0$. Consequently for $\boldsymbol{\alpha}\in \Sigma_1$ we have $p_{\boldsymbol{\alpha}}\geq 0$. 

Suppose that $p_{\boldsymbol{\alpha}}\geq 0$, $\boldsymbol{\alpha}\in \Sigma_r$, $r\in \{1,\dots,s\}$. Let be ${\boldsymbol{\alpha}^*}\in \Sigma_{r+1}$. We consider ${\bf x}\in \R_+^n$ with $x_j=0$ for $j\in \{1,\dots,n\}\backslash \{\alpha_1^*,\dots,\alpha_{s+1}^*\}$ and we have 
$P({\bf x})=p_{\boldsymbol{\alpha}^*}{\bf x}_{\boldsymbol{\alpha}^*}+\sum_{r=1}^{s}\sum_{\boldsymbol{\alpha}\in \Sigma_r(\{\alpha_1^*,\dots,\alpha_{s+1}^*\})}p_{\boldsymbol{\alpha}}{\bf x}_{\boldsymbol{\alpha}}+p_0.$
By hypothesis, for $x_{\alpha_1^*}> 0$, ..., $x_{\alpha_{s+1}^*}> 0$,  we have
$ p_{\boldsymbol{\alpha}^*}\geq -\frac{p_0}{{\bf x}_{\boldsymbol{\alpha}^*}}-\sum_{r=1}^{s}\sum_{\boldsymbol{\alpha}\in \Sigma_r(\{\alpha_1^*,\dots,\alpha_{s+1}^*\})}\frac{p_{\boldsymbol{\alpha}}}{{\bf x}_{\boldsymbol{\alpha^c}}}.$
We obtain that $p_{\boldsymbol{\alpha}^*}\geq 0$.
\end{proof}

\begin{lem}\label{caract-P_0}
Let be $A\in \mathcal{M}_n(\R)$. The following statements are equivalent:
\begin{enumerate}[(i)]
\item $A$ is a $\mathbb{P}_0$-matrix.
\item $\det(A+\emph{diag}({\bf x}))>0$ for all ${\bf x}\in \left(\R_+^*\right)^n$.
\item $\det(A+\emph{diag}({\bf x}))\geq 0$ for all ${\bf x}\in \R_+^n$.
\item $\mathbb{I}_n+A\,\emph{diag}({\bf x})$ is invertible for all ${\bf x}\in \left(\R_+^*\right)^n$.
\item $\mathbb{I}_n+A\,\emph{diag}({\bf x})$ is a $\mathbb{P}$-matrix for all ${\bf x}\in \left(\R_+^*\right)^n$.
\end{enumerate}
\end{lem}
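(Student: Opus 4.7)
The plan is to pivot on the polynomial identity \eqref{det-diag-11}--\eqref{polinom-A-675}: $\det(A+\text{diag}({\bf x}))=P_A({\bf x})$ is a multivariate polynomial whose monomial coefficients are exactly the principal minors of $A$ (with $\det A$ as the constant term, the top coefficient on $\prod_s X_s$ equal to $1$, and the remaining coefficients equal to $\det A[\boldsymbol{\alpha},\boldsymbol{\alpha}]$). Thus \emph{all} coefficients of $P_A$ are nonnegative if and only if $A$ is a $\mathbb{P}_0$-matrix, and applying Lemma \ref{lemma-polynomial-11} to $P_A$ immediately yields $(i)\Leftrightarrow(ii)\Leftrightarrow(iii)$.

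To bring in $(iv)$, I would use the factorization
\[
\mathbb{I}_n+A\,\text{diag}({\bf x})=\bigl(A+\text{diag}(1/{\bf x})\bigr)\text{diag}({\bf x}),
\]
valid for any ${\bf x}\in(\R_+^*)^n$. Taking determinants gives
\[
\det\!\bigl(\mathbb{I}_n+A\,\text{diag}({\bf x})\bigr)=P_A(1/{\bf x})\cdot\prod_{i=1}^n x_i,
\]
and since $\prod_i x_i>0$, invertibility is equivalent to $P_A(1/{\bf x})\neq 0$. The implication $(ii)\Rightarrow(iv)$ follows at once. For $(iv)\Rightarrow(ii)$ I would argue topologically: $P_A$ is a continuous (polynomial) function that is nowhere zero on the connected open set $(\R_+^*)^n$, hence of constant sign; and since the leading monomial gives $P_A({\bf y})\sim \prod_i y_i>0$ as all components of ${\bf y}$ tend to $+\infty$, that sign is positive.

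For $(v)$, the direction $(v)\Rightarrow(iv)$ is immediate (a $\mathbb{P}$-matrix is invertible, since its determinant is positive). For $(i)\Rightarrow(v)$ I would bootstrap the equivalences: every principal submatrix $A[\boldsymbol{\alpha},\boldsymbol{\alpha}]$ of a $\mathbb{P}_0$-matrix is itself a $\mathbb{P}_0$-matrix (its principal minors being principal minors of $A$), and one checks directly that the principal submatrix of $\mathbb{I}_n+A\,\text{diag}({\bf x})$ indexed by $\boldsymbol{\alpha}\in\Sigma_r$ equals $\mathbb{I}_r+A[\boldsymbol{\alpha},\boldsymbol{\alpha}]\,\text{diag}({\bf x}_{\boldsymbol{\alpha}})$, where ${\bf x}_{\boldsymbol{\alpha}}:=(x_{\alpha_1},\dots,x_{\alpha_r})$. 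Applying the already-proved implication $(i)\Rightarrow(ii)$ in dimension $r$ to each $A[\boldsymbol{\alpha},\boldsymbol{\alpha}]$ then shows that every such principal minor of $\mathbb{I}_n+A\,\text{diag}({\bf x})$ is positive, so $\mathbb{I}_n+A\,\text{diag}({\bf x})$ is a $\mathbb{P}$-matrix.

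The only genuine difficulty I expect is the step $(iv)\Rightarrow(ii)$, which is not purely algebraic: it leans on the connectedness of $(\R_+^*)^n$ together with the asymptotic positivity of the leading monomial of $P_A$ to promote ``nonvanishing'' into ``positive.'' Once that is in place, the bootstrap that lifts $(ii)$ to the full $\mathbb{P}$-matrix statement $(v)$ is just a clean application of the already-established equivalence $(i)\Leftrightarrow(ii)$ in each dimension $r\le n$.
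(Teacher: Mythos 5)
Your proposal is correct, and for most of the lemma it follows the paper's own route: the equivalence $(i)\Leftrightarrow(ii)\Leftrightarrow(iii)$ via the identity $\det(A+\mathrm{diag}({\bf x}))=P_A({\bf x})$ and Lemma \ref{lemma-polynomial-11}, and the link between $(ii)$ and $(iv)$ via $\det(\mathbb{I}_n+A\,\mathrm{diag}({\bf x}))=P_A(1/{\bf x})\prod_i x_i$, are exactly the paper's steps. Two points of genuine difference. First, you spell out $(iv)\Rightarrow(ii)$ (nonvanishing of $P_A$ on the connected set $\left(\R_+^*\right)^n$ plus positivity of the dominant monomial forces $P_A>0$), whereas the paper dismisses the whole equivalence $(ii)\Leftrightarrow(iv)$ as ``easy to observe''; your argument is the missing content, and it is sound — one could equally use the path $t\mapsto\det(\mathbb{I}_n+tA\,\mathrm{diag}({\bf x}))$, which equals $1$ at $t=0$. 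Second, and more substantially, your proof of $(i)\Rightarrow(v)$ is different: the paper approximates the $\mathbb{P}_0$-matrix $A$ by $\mathbb{P}$-matrices and invokes two cited facts from Johnson--Smith--Tsatsomeros (that a $\mathbb{P}$-matrix times a positive diagonal matrix is a $\mathbb{P}$-matrix, and that $\mathbb{I}_n$ plus a $\mathbb{P}_0$-matrix is a $\mathbb{P}$-matrix), while you bootstrap: the principal submatrix of $\mathbb{I}_n+A\,\mathrm{diag}({\bf x})$ indexed by $\boldsymbol{\alpha}$ is $\mathbb{I}_r+A[\boldsymbol{\alpha},\boldsymbol{\alpha}]\,\mathrm{diag}({\bf x}_{\boldsymbol{\alpha}})$, each $A[\boldsymbol{\alpha},\boldsymbol{\alpha}]$ is again $\mathbb{P}_0$, and the already-proved $(i)\Rightarrow(ii)$ in dimension $r$ together with the same determinant factorization makes every principal minor positive. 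This is correct (the submatrix identity is an immediate entrywise check) and buys self-containedness: no external closure or perturbation results are needed, only the determinant identity you already use for $(ii)\Leftrightarrow(iv)$, applied in every dimension $r\le n$. The paper's route, in exchange, is shorter on the page and reuses standard $\mathbb{P}$-matrix machinery it cites anyway elsewhere.
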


\begin{proof}
From Lemma \ref{lemma-polynomial-11} and formulas \eqref{det-diag-11} and \eqref{polinom-A-675} we obtain $(i)\Leftrightarrow (ii) \Leftrightarrow (iii)$.

If ${\bf x}\in \left(\R_+^*\right)^n$ with the components $x_1, \dots, x_n$, then we have the following equality
$$\det(A+\text{diag}({\bf x}))=\det\left(\mathbb{I}_n+A\,\text{diag}\frac{1}{\bf x}\right)\prod_{r=1}^nx_r.$$
It is easy to observe that we have $(ii)\Leftrightarrow (iv)$. 

$(i)\Rightarrow (v)$ If $A$ is a $\mathbb{P}_0$-matrix, then there exists the sequence $(A_r)_{r\in \mathbb{N}}$ of $\mathbb{P}$-matrices such that $A_r\to A$. Using Theorem 4.3.2-(4) from \cite{johnson-smith-tsatsomeros} we deduce that, for ${\bf x}\in \left(\R_+^*\right)^n$, $A_r\text{diag}({\bf x})$ is a $\mathbb{P}$-matrix. Consequently, $A\text{diag}({\bf x})$ is a $\mathbb{P}_0$-matrix. From Lemma 4.8.1, \cite{johnson-smith-tsatsomeros}, we obtain that $\mathbb{I}_n+A\,\text{diag}({\bf x})$ is a $\mathbb{P}$-matrix. 

The implication $(v)\Rightarrow (iv)$ is obvious.
\end{proof}


\begin{thebibliography}{99}
\bibitem{polyak} Barabanov N., Ortega R., Gri\~n\'o R., Polyak B., On Existence and Stability of Equilibria of Linear
Time-Invariant Systems With Constant Power Loads, IEEE Trans. Circuits Syst. I, Reg. Papers, vol. 63, no. 1,
pp. 114–121, Jan. 2016.
\bibitem{chandler} Chandler D., The norm of the Schur product operation, Numerische Mathematik, vol. 4, no. 1, pp. 343–44, 1962. 
\bibitem{comanescu} Com\u anescu D., The steady states of isotone electric systems, Math. Meth. Appl. Sci. (2023), 1–15, DOI 10.1002/mma.9375.
\bibitem{ehrgott} Ehrgott M., Multicriteria Optimization. Second edition, Springer, 2005.
\bibitem{gale} Gale D., Nikaid\^o H., The Jacobian Matrix and Global Univalence of Mappings, Math. Annalen, Vol. 159 (1965), pp. 81-93. 
\bibitem{horn} Horn R.A., Johnson C.R., Matrix Analysis, Second Edition, Cambridge University Press, 2013.
\bibitem{istratescu} Istr\u a\c tescu V.I., Fixed Point Theory. An Introduction, Dordrecht-Boston, Mass.: D. Reidel Publishing Co., 1981.
\bibitem{jeeninga-1} M. Jeeninga, C. De Persis, A. Van der Schaft, {\it DC Power Grids With Constant-Power Loads—Part I: A Full Characterization of Power Flow Feasibility, Long-Term Voltage Stability, and Their Correspondence}, IEEE Trans. Automat. Contr., 68, no. 1 (2022), pp. 2-17.
\bibitem{jeeninga-2} M. Jeeninga, C. De Persis, A. Van der Schaft, {\it DC power grids with constant-power loads—Part II: Nonnegative power demands, conditions for feasibility, and high-voltage solutions}, IEEE Trans Automat. Contr., 68, no. 1 (2022), pp. 18-30.
\bibitem{johnson-smith-tsatsomeros} Johnson C.R., Smith R.L., Tsatsomeros M.J., Matrix positivity, Cambridge Univ. Press, 2020.
\bibitem{marcus} Marcus M., Determinants of Sums, The College Mathematics Journal, vol. 21 (1990), pp. 130-135.
\bibitem{matveev} Matveev A.S., Machado J.E., Ortega R., Schiffer J., Pyrkin A.,  A Tool for Analysis of Existence of Equilibria and Voltage Stability in Power Systems With Constant Power Loads, IEEE Trans. Automat. Contr., Vol. 65, No. 11, Nov. 2020.
\bibitem{micchelli} Micchelli C.A., Willoughby R.A., On Functions Which Preserve the Class of Stieltjes Matrices, Lin. Algebra Appl., Vol. 23 (1979), pp.141-156. 
\bibitem{plemmons} Plemmons R.J., $M$-Matrix Characterizations. I-Nonsingular $M$-Matrices,  Lin. Algebra Appl., Vol. 18 (1977), pp. 175-188. 
\bibitem{sanchez} Sanchez S., Ortega R., Gri\~n\'o R., Bergna G., Molinas-Cabrera M., Conditions for Existence of Equilibrium Points of Systems with Constant Power Loads, IEEE Trans. Circuits Syst. I, Reg. Papers, vol. 61,
no. 7, pp. 2204–2211, July 2014.
\end{thebibliography}
\end{document}